\tikzstyle{block} = [rectangle, draw, fill=black!10,text width=10em, text centered, rounded corners, minimum height=1em]
\tikzstyle{bb} = [rectangle, draw, fill=black!10, text width=15em, text centered, rounded corners, minimum height=1em]
\numberwithin{equation}{section}
\theoremstyle{plain}
\newtheorem{theorem}{Theorem}%[section]
\newtheorem{proposition}{Proposition}
\newtheorem{lemma}{Lemma}
\newtheorem{corollary}{Corollary}
\newtheorem{remark}{Remark}
\DeclareMathOperator*{\argmin}{arg\,min}
\newcommand{\bA}{\mathbf{A}}
\newcommand{\bH}{\mathbf{H}}
\newcommand{\bI}{\mathbf{I}}
\newcommand{\bJ}{\mathbf{J}}
\newcommand{\bP}{\mathbf{P}}
\newcommand{\bQ}{\mathbf{Q}}
\newcommand{\bU}{\mathbf{U}}
\newcommand{\bV}{\mathbf{V}}
\newcommand{\bZ}{\mathbf{Z}}
\newcommand{\ba}{\mathbf{a}}
\newcommand{\bb}{\mathbf{b}}
\newcommand{\bg}{\mathbf{g}}
\newcommand{\bw}{\mathbf{w}}
\newcommand{\bx}{\mathbf{x}}
\newcommand{\by}{\mathbf{y}}
\newcommand{\bz}{\mathbf{z}}
\newcommand{\bbx}{\bar{\bx}}
\newcommand{\ev}{\mathbb{E}}
\newcommand{\var}{\mathrm{var}}
\newcommand{\real}{\mathbb{R}}
\newcommand{\rn}[1]{%
  \textup{\uppercase\expandafter{\romannumeral#1}}%
}
\newcommand{\T}{\scriptscriptstyle{\text{T}}}
\newcommand{\vech}{\text{vech}}
\newcommand{\ve}{\text{vec}}
\newcommand{\spn}{\text{span}}
\newcommand{\n}{\nonumber}
\begin{document}

\title{\bf Enveloped Huber Regression}

\author{\sc{Le Zhou, R. Dennis Cook and Hui Zou} \\
School of Statistics \\
      University of Minnesota
 }
     
\date{\today}
%\date{}
\maketitle

\begin{abstract}
Huber regression (HR) is a popular robust alternative to the least squares regression when the error follows a heavy-tailed distribution. We propose a new method called the enveloped Huber regression (EHR) by considering the envelope assumption that there exists some subspace of the predictors that has no association with the response, which is referred to as the immaterial part. More efficient estimation is achieved via the removal of the immaterial part. Different from the envelope least squares (ENV) model whose estimation is based on maximum normal likelihood, the estimation of the EHR model is through Generalized Method of Moments. The asymptotic normality of the EHR estimator is established, and it is shown that EHR is more efficient than HR. Moreover, EHR is more efficient than ENV when the error distribution is heavy-tailed, while maintaining a small efficiency loss when the error distribution is normal. Moreover, our theory also covers the heteroscedastic case in which the error may depend on the covariates. Extensive simulation studies confirm the messages from the asymptotic theory. EHR is further illustrated on a real dataset.

\vskip10pt

  \noindent{\bf Keywords:} {Asymptotics efficiency; Envelope model; Huber regression; Heavy-tailed distributions}

\end{abstract}

\section{Introduction}
Envelope models  have drawn considerable attentions over the past ten years. The idea was first proposed in \cite{cook2007} to deal with certain high dimensional problems in sufficient dimension reduction.  They \citep{cook2010} later extended envelopes to dimension reduction in response variables in multivariate linear regression model with normal errors. Under this setting, the envelope model has been shown to effectively reduce estimation variance compared with standard methods. The envelope approach was extended to a variety of envelope models \citep{su2011,su2012,su2013}. The predictor envelope was proposed and studied in \cite{cook2013} in which a connection between the envelope model and partial least squares was revealed. This led to further studies on partial least squares in the high-dimensional setting \citep{cook2019,su2020}.  \cite{cook2015} applied the envelope method to reduced rank regression, and \cite{zhang2015} extended the envelope method from linear regression to generalized linear models and Cox's proportional hazard model. \cite{su2016} proposed sparse envelope models for variable selection in the envelope model for multivariate regression. 
\cite{ding2017} developed an enveloped quantile regression model \citep{kb1978, Koenker2005}. A Bayesian approach to the envelope model was studied in \cite{su2017}. \cite{zhang2017} and \cite{dingcook2018} developed  envelope models for tensor and matrix regression problems. Spatial envelopes and envelopes for time series data were proposed and studied in \cite{R2017} and \cite{wang2018}. For a comprehensive overview of envelope methods, see \cite{cook2018}.

In this paper we revisit the classical setting for using the envelope model, i.e., the multiple linear regression problem. The existing envelope methods are developed by using the normal likelihood and their optimal efficiency gain over the standard regression model relies on the normal error assumption. Hence, it is not clear whether the likelihood-based envelope model is the best choice when the error follows some heavy tailed distribution. This motivated us to consider a non-likelihood based envelope model that is also resistant to heavy tailed errors. Huber regression \citep{Huber1964} is the celebrated robust alternative to the least squares because it enjoys high efficiency and robustness simultaneously.   \cite{Huber1964} accomplished this by replacing the squared error loss in least squares with $\rho(r)$, where $r$ denotes the residual and the derivative of $\rho(r)$ is $\psi(r) =\max\{\min\{r,k\},-k\}$ where $k$ is some constant. We aim to extend the envelope idea to the Huber regression model and we call the resulting method  enveloped Huber regression (or EHR for short). There are dual purposes of doing so. First, we want to improve the robustness of the classical envelope regression model. Second, we also want to reduce the estimation variance of the classical Huber regression model. The main idea of EHR comes from examining $\ev^{\rho}[y|\bx] = \argmin_{q\in\real}\ev[\rho(y-q)|\bx]$, the target in Huber regression, from the perspective of envelope modeling. It is quite reasonable to assume that $\ev^{\rho}[y|\bx]$ is actually $\ev^{\rho}[y|\bP_S\bx]$ where $\bP_S$ is the projection matrix to the subspace $S$ in the predictor space. 
This assumption means that there are linear combinations of the predictors that are irrelevant in the model, which is referred to as the immaterial part. 
If we can effectively detect and remove the immaterial part, a significant gain in estimation efficiency is achieved. The classical envelope model was formulated and estimated by using the normal-likelihood under the assumption that the error is Gaussian and independent of $\bx$. We would like to avoid using any parametric distribution to model the error term.  
We estimate EHR through generalized method of moments (GMM) \citep{lars1982}, in which no specific error distribution is assumed. We establish asymptotic normality of the proposed EHR estimator and demonstrate its efficiency gains compared with the classical Huber regression and the classical envelope model. Moreover, our theory also covers the conditional heteroscedastic case in which the error may depend on the covariates, while the existing work are based on the independent homoscedastic error assumption \citep{cook2013,cook2019}.
Numerical studies and a real data example confirm the efficiency gains in finite samples. 

The paper is organized as follows. In Section 2, we review the basics of the Huber regression and then derive the formulation of the enveloped Huber regression (EHR). In Section 3, we discuss the estimation procedure of EHR via GMM. In Section 4, we establish the asymptotic distribution of EHR estimators and we show that it is more efficient than Huber regression. Simulation studies are presented in Section 5, and a real data example is given in Section 6. The proofs of all the theoretical results are shown in an Appendix.

\section{The Model}
\subsection{Notation}
To facilitate the discussion, we first introduce some notation and definitions. 
Let $\by$ and $\bx$ be the response and covariates, respectively.
Let $\real^{p\times u}$ be the set of all $p\times u$ matrices. Let $\bI_p$ be the $p\times p$ identity matrix. For a subspace $S$ in $\real^p$, let $S^{\perp}$ be its orthogonal complement in $\real^p$. For any matrix $M \in \real^{p\times u}$, $\spn(M)$ stands for the subspace spanned by the columns of $M$. Let $\bP_M$ be
the projection matrix onto $\spn(M)$ and $\bQ_M = \bI_p - \bP_M$ be the projection matrix onto its orthogonal complement $\spn(M)^{\perp}$. Similarly, for a subspace $S$ in $\real^p$,  let $\bP_S$ and $\bQ_S$ be the projection matrix onto $S$ and $S^{\perp}$, respectively.  ``vec" stands for the vectorization of a matrix column-wise, while ``vech'' stands for the half-vectorization operator that vectorizes the lower triangle of a symmetric matrix. Therefore, $M\in \real^{p\times p}$, $\ve(M)\in\real^{p^2}$ and $\vech(M)\in\real^{p(p+1)/2}$. Let $\|M\|$ be the operator norm of a matrix $M$, and let $M^{\dagger}$ represent the Moore-Penrose inverse of a positive semidefinite matrix $M$. Let $\overset{d}\to$ and $\overset{P}\to$ stand for convergence in distribution and probability, respectively. We say a subspace $R$ in $\real^p$ is a reducing subspace of a matrix $M\in \real^{p\times p}$ whenever $M = \bP_R M \bP_R + \bQ_R M \bQ_R$. For two symmetric matrices $A$ and $B$,  $A\preceq B$ means that $B-A$ is positive semidefinite and $A\prec B$ means that $B-A$ is positive definite. Finally, for two matrices $A\in\real^{m\times n}$ and $B\in\real^{p\times q}$, let $A\otimes B$ be their Kronecker product which is a $pm\times qn$ matrix.

\subsection{Huber regression}\label{huber}
Huber regression was first proposed in \cite{Huber1964}. It serves as an alternative tool to the traditional ordinary least squares, which can be sensitive to outliers and inefficient under heavy-tailed error distributions. The Huber loss is defined as $\rho(x) = \begin{cases}
k|x| - \frac{k^2}{2} &\text{$|x| > k$}\\
\frac{x^2}{2} &\text{$|x|\le k $}
\end{cases}$, with the corresponding influence function being $\psi(x)= \dot{\rho}(x) = \begin{cases}
k  &\text{$x > k$}\\
x &\text{$|x|\le k $}\\
-k &\text{$x\le -k $}\\
\end{cases}$. Here $k$ is a tuning parameter, which will be discussed later. Huber regression aims to estimate the following quantity, $\ev^{\rho}[y|\bx] = \argmin_{u\in\real}\ev[\rho(y-u) |\bx]$, which is referred to as ``Huber's $\rho$-mean''. A linear Huber regression model assumes that $\ev^{\rho}[y|\bx] = \mu^*+\bx^{\T}\beta^*$ for some unknown $\mu^*\in\real$ and $\beta^*\in\real^p$. By definition and boundedness of $\psi(\cdot)$, the linear Huber regression model can be written as follows 
\begin{equation}\label{Hubermodel}
y = \mu^*+\bx^{\T}\beta^* + \epsilon \text{ with } \ev[\psi(\epsilon)|\bx]=0.
\end{equation}
We use the notation $\bx \sim (\mu_x^*,\Sigma_x^*)$ to denote that $\bx$ follows a distribution with mean $\mu_x^*$ and some positive definite covariance matrix $\Sigma_x^*$.

Compare model~(\ref{Hubermodel}) with the standard linear regression model $y = \mu^*+\bx^{\T}\beta^* + \epsilon$ where $\epsilon$ is independent of $\bx$ and $\ev[\epsilon]$=0. It is easy to see that when $\epsilon$ in the linear regression model follows a symmetric distribution then the Huber regression model holds as well. Throughout this paper our theoretical analysis is based on model~(\ref{Hubermodel}) in which conditional heteroscedasticity is allowed (i.e., $\epsilon$ depends on $\bx$).

Given a random sample of $n$ i.i.d. observations $\{(y_i,\bx_i)\}_{i=1}^n$ from model~(\ref{Hubermodel}), 
Huber regression estimates $\mu^*$ and $\beta^*$ through 
\begin{equation*}
\min_{\mu \in\real, \beta\in\real^p } \sum_{i=1}^{n}\rho(y_i -\mu -  \bx_i^{\T}\beta).
\end{equation*}
In Huber regression, the tuning parameter $k$ is related to the efficiency. When $k\to \infty$, Huber regression is the same as OLS; when $k\to 0$, Huber regression is the same as $L_1$ regression (or median regression). It is shown by Huber that when error follows normal distribution with $\sigma=1$, $k=1.345$ produces $95\%$ efficiency relative to OLS \citep{Huber2004}. Meanwhile, it gives substantial resistance to outliers in non-normal case. A rule of thumb is to choose $k=1.345\frac{\text{MAD}}{0.6745}$, where $\text{MAD}= \text{median}|y_i - \hat\mu - \bx_i^{\T}\hat\beta|$ and $\hat\mu$ and $\hat\beta$ are estimators by median regression. We used this rule in our numerical experiments. 

\subsection{Enveloped Huber Regression}
The idea of EHR comes from the intuition that certain linear combinations of the predictors in data may be irrelevant to the conditional $\rho$-mean $\ev^{\rho}[y|\bx]$. We aim to find a subspace $S$ such that ($\ba$): $\bQ_S\bx$ and $\bP_S\bx$ are uncorrelated; and ($\bb$): the conditional Huber's $\rho$-mean of $y$ given $\bx$ depends on $\bx$ only through $\bP_S \bx$. As such, we call $\bP_S\bx$ the material part of $\bx$ and call $\bQ_S\bx$ the immaterial part of $\bx$. Condition ($\ba$) can be mathematically formulated as $\text{cov}(\bQ_S\bx, \bP_S\bx)=0$. Condition ($\bb$) links the conditional $\rho$-mean of response given covariates to the material part. It can be conceptually formulated by two possible definitions: (i), the conditional $\rho$-mean of $y$ given $\bx$ is identical with the conditional $\rho$-mean of $y$ given $\bP_S\bx$; (ii), the signal vector $\beta^*$ lies in the subspace $S$. It turns out that these two definitions are equivalent under model \eqref{Hubermodel}, which is summarized in the following proposition.
\begin{proposition}\label{equivdef}
Under model \eqref{Hubermodel}, the following two statements (\rm{i}) and (\rm{ii}) are equivalent:
$$(\rm{i}) \  \ev^{\rho}[y| \bx] = \ev^{\rho}[y| \bP_S\bx]; \quad (\rm{ii}) \  \beta^*\in S.$$
\end{proposition}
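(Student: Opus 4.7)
The plan is to prove the two implications separately. The forward direction (ii)$\Rightarrow$(i) follows from the tower property combined with the convex uniqueness of Huber's $\rho$-mean. The reverse direction (i)$\Rightarrow$(ii) exploits the linearity of $\ev^{\rho}[y|\bx]$ enforced by model~\eqref{Hubermodel} together with the non-degeneracy $\Sigma_x^*\succ 0$.

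For (ii)$\Rightarrow$(i), assume $\beta^*\in S$, so that $\beta^*=\bP_S\beta^*$ and hence $\bx^{\T}\beta^*=(\bP_S\bx)^{\T}\beta^*$ depends on $\bx$ only through $\bP_S\bx$. Set $u^*=\mu^*+(\bP_S\bx)^{\T}\beta^*$. Then $y-u^*=\epsilon$ as in~\eqref{Hubermodel}, and applying the tower property to the moment identity $\ev[\psi(\epsilon)|\bx]=0$ yields $\ev[\psi(y-u^*)|\bP_S\bx]=0$. Because $\rho$ is convex, the map $u\mapsto \ev[\rho(y-u)|\bP_S\bx]$ has a unique minimizer characterized by the vanishing of its derivative, so $u^*=\ev^{\rho}[y|\bP_S\bx]$, which coincides with $\ev^{\rho}[y|\bx]=\mu^*+\bx^{\T}\beta^*=u^*$.

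For (i)$\Rightarrow$(ii), write $g(\bP_S\bx):=\ev^{\rho}[y|\bP_S\bx]$; the hypothesis then gives $\mu^*+\bx^{\T}\beta^*=g(\bP_S\bx)$ almost surely. Splitting $\beta^*=\bP_S\beta^*+\bQ_S\beta^*$ and using $(\bQ_S\bx)^{\T}\bP_S\beta^*=0$, we rearrange this to
\begin{equation*}
(\bQ_S\bx)^{\T}\bQ_S\beta^*\;=\;g(\bP_S\bx)-\mu^*-(\bP_S\bx)^{\T}\beta^*\quad\text{a.s.}
\end{equation*}
The right-hand side is $\sigma(\bP_S\bx)$-measurable, so taking the conditional variance given $\bP_S\bx$ forces $(\bQ_S\beta^*)^{\T}\var(\bQ_S\bx|\bP_S\bx)(\bQ_S\beta^*)=0$ almost surely. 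Integrating, and combining $\Sigma_x^*\succ 0$ with the envelope property $(\ba)$ to ensure that $\ev[\var(\bQ_S\bx|\bP_S\bx)]$ is non-degenerate along every direction in $S^{\perp}$, we conclude $\bQ_S\beta^*=0$, i.e., $\beta^*\in S$.

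The main obstacle is precisely this closing step of the reverse direction: converting the almost-sure identity between a linear functional of $\bQ_S\bx$ and a function of $\bP_S\bx$ alone into the algebraic conclusion $\bQ_S\beta^*=0$. Under joint normality this would be immediate, but in the present nonparametric setting one must combine the positive definiteness of $\Sigma_x^*$ with condition $(\ba)$ to secure the required non-degeneracy; once that is in place the remaining manipulations are routine.
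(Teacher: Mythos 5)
Your forward direction (ii)$\Rightarrow$(i) is essentially the paper's argument (tower property applied to $\ev[\psi(\epsilon)|\bx]=0$, then uniqueness of the minimizer of the convex map $u\mapsto\ev[\rho(y-u)\mid\bP_S\bx]$), so no issue there. The reverse direction, however, contains a genuine gap, and it sits exactly at the step you yourself flag as the main obstacle. After reducing to $(\bQ_S\bx)^{\T}\bQ_S\beta^*=g(\bP_S\bx)-\mu^*-(\bP_S\bx)^{\T}\beta^*$ a.s.\ and taking conditional variances, you need $\ev[\var(\bQ_S\bx\mid\bP_S\bx)]$ to be positive definite on $S^{\perp}$. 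But condition ($\ba$), i.e.\ $\mathrm{cov}(\bQ_S\bx,\bP_S\bx)=0$, together with $\Sigma_x^*\succ 0$ does \emph{not} deliver this. By the law of total variance, $\ev[\var(\bQ_S\bx\mid\bP_S\bx)]=\bQ_S\Sigma_x^*\bQ_S-\var\big(\ev[\bQ_S\bx\mid\bP_S\bx]\big)$, and uncorrelatedness only controls the \emph{linear} part of the conditional mean; $\ev[\bQ_S\bx\mid\bP_S\bx]$ can be a nonlinear function of $\bP_S\bx$ carrying all of the variance of $\bQ_S\bx$. Concretely, take $p=2$, $S=\mathrm{span}(e_1)$, $x_1$ symmetric and non-degenerate with $\ev[x_1^3]=0$, and $x_2=x_1^2-\ev[x_1^2]$: then $\mathrm{cov}(x_1,x_2)=0$ and $\Sigma_x^*\succ0$, yet $\var(x_2\mid x_1)=0$ identically, so your quadratic form vanishes for every $\bQ_S\beta^*$ and the argument cannot conclude. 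A further, smaller point: the proposition is asserted under model \eqref{Hubermodel} alone, and the paper explicitly remarks that the equivalence is independent of condition ($\ba$); your proof imports ($\ba$) as an extra hypothesis.

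For contrast, the paper closes this direction by a translation rather than a variance computation: from $\bx^{\T}\bQ_S\beta^*=f(\bP_S\bx)$ it evaluates the identity at $\tilde\bx=\bx+\bQ_S\beta^*$, which leaves $\bP_S\tilde\bx=\bP_S\bx$ unchanged while increasing the left-hand side by $\|\bQ_S\beta^*\|^2>0$, a contradiction unless $\bQ_S\beta^*=0$. That argument trades your non-degeneracy requirement for an implicit richness assumption on the support of $\bx$ (the identity must hold at both $\bx$ and $\bx+\bQ_S\beta^*$), but it needs no positivity of the conditional variance and never invokes condition ($\ba$). To salvage your variance route you would have to assume something like independence of $\bQ_S\bx$ and $\bP_S\bx$, or normality of $\bx$, both strictly stronger than what the proposition states.
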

The proof for this proposition is in fact not trivial and is placed in Appendix. The proof also shows that the equivalence between these two formulations of condition ($\bb$) is actually independent of condition ($\ba$).

Therefore, condition ($\bb$) is equivalent to $\beta^*\in S$. Because $\bx \sim (\mu_x^*,\Sigma_x^*)$, the condition ($\ba$) holds if and only if $S$ is a reducing subspace of $\Sigma_x^*$ \citep{cook2010}. If we can find such a subspace $S$, then we can remove the immaterial part $\bQ_S \bx$ and hence improve the estimation efficiency in subsequent analysis. Notice that there might be two different subspaces that both satisfy conditions ($\ba$) and ($\bb$). By similar arguments in \cite{cook2010}, we take the intersection of all subspaces satisfying conditions ($\ba$) and ($\bb$) and the resulting subspace is the smallest subspace that satisfy these conditions. The smallest subspace is referred to as the $\Sigma_x^*$-predictor envelope of $\text{span}(\beta^*)$, denoted as $\mathcal{E}_{\Sigma_{x}^*}\left(\beta^*\right)$. The dimension of $\mathcal{E}_{\Sigma_{x}^*}\left(\beta^*\right)$ is denoted as $u$ where $1\le u \le p$. If $u$ is $p$, then $\real^p$ is the only subspace that can satisfy conditions ($\ba$) and ($\bb$), and the method reduces to the usual Huber regression. When $u$ is much smaller than $p$, which is very likely in many applications, we expect to have great efficiency gain via the removal of immaterial part from the model. 

\cite{cook2013} based their starting point for predictor envelopes on the pair of conditions
\begin{align}\label{condc}
\text {(i)}\; \text{cov}(\bQ_S\bx, \bP_S\bx)=0 \;\;\;\text {and\;\;\; (ii)}\; \text{cov}(y, \bQ_S\bx \mid \bP_S\bx)=0.
\end{align}
Condition (i) here is the same as our condition ($\ba$). Under the assumption that error is independent from the covariates, \cite{cook2013} derived that condition (ii) in (\ref{condc}) is equivalent to $\beta^{*}\in S$. In contrast, our approach that directly links the conditional $\rho$-mean to the material part works under more general settings, where errors do not need to be independent from the covariates. Our development diverges from this point.

Let $\Gamma\in \real^{p\times u}$ be some semi-orthogonal basis of $\mathcal{E}_{\Sigma_{x}^*}\left(\beta^*\right)$, and let $\Gamma_0\in \real^{p\times (p-u)}$ be some semi-orthogonal basis of $\mathcal{E}_{\Sigma_{x}^*}\left(\beta^*\right)^{\perp}$. Based on the conditions ($\ba$) and ($\bb$), we can do a reparameterization as follows
\begin{align}\label{env}
&\beta^* = \Gamma\eta \text{ for some $\eta \in\real^u$}; \nonumber\\
&\Sigma_x^* = \Gamma \Omega \Gamma^{\T} + \Gamma_0 \Omega_0 \Gamma_0^{\T} \text{ for some positive semidefinite matrices $\Omega$ and $\Omega_0$}. 
\end{align}
In view of \eqref{env}, we can reparameterize the original Huber regression model~(\ref{Hubermodel}) into the following enveloped Huber regression model:
\begin{align}\label{setup}
&y_i =\mu^*+\bx_i^{\T}\beta^* + \epsilon_i,\; \{(\bx_i,\epsilon_i)\}_{i=1}^{n}, (\bx,\epsilon) \text{ are i.i.d.} ,\bx \sim (\mu_x^*,  \Sigma_x^*), \ev_{\theta^*}[\psi(\epsilon)|\bx]=0\nonumber\\
&\beta^*= \Gamma\eta \nonumber\\
&\Sigma_x^* = \Gamma \Omega \Gamma^{\T} + \Gamma_0 \Omega_0 \Gamma_0^{\T} .
\end{align}
When $\Gamma$ is known, estimation of $\beta^{*}=\Gamma\eta$ is through minimizing
\begin{align}\label{eta}
\min_{\mu\in\real,\eta\in\real^u}\sum_{i=1}^n \rho(y_i - \mu-\eta^{\T}\Gamma^{\T}\bx_i),
\end{align}
which is the usual Huber regression. 
When $\Gamma$ is known, the envelope estimator $\hat\beta_{eh}^* \coloneqq  \Gamma\hat\eta$ is more efficient than the usual Huber estimator $\hat\beta_{h}$. The following proposition uses the independent error case as an example to demonstrate this point. The efficiency gain claim is also generally true for the heteroscedastic error case. See section 4 for details.

\begin{proposition}\label{p1}
Assume \eqref{setup} holds and $\hat\eta$ is the solution from \eqref{eta}. Further assumes $\epsilon$ is independent from $\bx$. 
Then $\sqrt{n}(\hat\beta_{eh}^*-\beta^*)$ converges to a multivariate normal distribution with mean zero and covariance matrix 
\begin{equation*}
\mathrm{avar}(\sqrt{n}\hat\beta_{eh}^*) = \frac{\ev_{\theta^*}[\psi^2(\epsilon)]}{(\ev_{\theta^*}[\psi'(\epsilon)])^2} \Gamma\Omega^{-1}\Gamma^{\T}.
\end{equation*}
\end{proposition}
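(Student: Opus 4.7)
My plan is to treat Proposition 2 as a standard M-estimator (equivalently, Z-estimator) problem after the reparameterization $\tilde{\bx}_i := \Gamma^{\T}\bx_i \in \real^u$. With $\Gamma$ known, $(\hat\mu,\hat\eta)$ solves the first-order estimating equations
\begin{equation*}
\sum_{i=1}^n \psi(y_i - \hat\mu - \hat\eta^{\T}\tilde{\bx}_i) = 0, \qquad \sum_{i=1}^n \psi(y_i - \hat\mu - \hat\eta^{\T}\tilde{\bx}_i)\,\tilde{\bx}_i = 0.
\end{equation*}
Under model \eqref{setup} with $\beta^* = \Gamma\eta$, the population version of these equations vanishes at $(\mu^*,\eta)$ because $\ev[\psi(\epsilon)\mid \bx] = 0$, so consistency of $(\hat\mu,\hat\eta)$ follows by standard arguments.

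The next step is to extract asymptotic normality via the usual sandwich expansion. Let $m_i(\mu,\eta) = \psi(y_i - \mu - \eta^{\T}\tilde{\bx}_i)\,(1,\tilde{\bx}_i^{\T})^{\T}$. The variance matrix of $m_i$ at the truth is $B = \ev[\psi^2(\epsilon)\,(1,\tilde{\bx}^{\T})^{\T}(1,\tilde{\bx}^{\T})]$, while the (negative) Jacobian with respect to $(\mu,\eta)$ is $A = \ev[\psi'(\epsilon)\,(1,\tilde{\bx}^{\T})^{\T}(1,\tilde{\bx}^{\T})]$. A standard M-estimator CLT then gives $\sqrt{n}\bigl((\hat\mu,\hat\eta)-(\mu^*,\eta)\bigr) \overset{d}\to N(0, A^{-1} B A^{-1})$.

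To collapse this into the advertised form, I would exploit two simplifications. First, independence of $\epsilon$ and $\bx$ lets the expectations factor as $A = \ev[\psi'(\epsilon)]\,M$ and $B = \ev[\psi^2(\epsilon)]\,M$, where $M = \ev[(1,\tilde{\bx}^{\T})^{\T}(1,\tilde{\bx}^{\T})]$. Block-inverting $M$ and extracting the $\eta$-block profiles out $\mu$ and produces $\bigl(\var(\tilde{\bx})\bigr)^{-1} = (\Gamma^{\T}\Sigma_x^*\Gamma)^{-1}$. Substituting the envelope decomposition $\Sigma_x^* = \Gamma\Omega\Gamma^{\T} + \Gamma_0\Omega_0\Gamma_0^{\T}$ and using $\Gamma^{\T}\Gamma = \bI_u$, $\Gamma^{\T}\Gamma_0 = 0$ collapses $\Gamma^{\T}\Sigma_x^*\Gamma$ to $\Omega$. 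Hence $\mathrm{avar}(\sqrt{n}\hat\eta) = \frac{\ev[\psi^2(\epsilon)]}{(\ev[\psi'(\epsilon)])^2}\Omega^{-1}$. A trivial delta-method step through the linear map $\hat\beta_{eh}^* = \Gamma\hat\eta$ yields the claimed $\frac{\ev[\psi^2(\epsilon)]}{(\ev[\psi'(\epsilon)])^2}\Gamma\Omega^{-1}\Gamma^{\T}$.

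The main technical obstacle is that the Huber $\psi$ is not classically differentiable at $\pm k$, so the Taylor expansion justifying the sandwich representation requires extra care. I would handle this by appealing to the M-estimator framework (e.g.\ van der Vaart, \textit{Asymptotic Statistics}, Chapter 5) that only requires the population criterion to be twice differentiable; here $\ev[\psi'(\epsilon)] = P(|\epsilon|\le k)$ is well-defined and strictly positive under the mild regularity assumption that $\epsilon$ places positive mass on $[-k,k]$. Stochastic equicontinuity of the empirical process indexed by $(\mu,\eta)$, which is automatic because $\psi$ is bounded and Lipschitz, supplies the remaining ingredient. Apart from this measure-theoretic care, the derivation is routine linear algebra exploiting the envelope decomposition.
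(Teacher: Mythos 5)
Your proposal is correct and follows essentially the same route as the paper: reduce to standard Huber-regression asymptotics for the regression of $y$ on $\Gamma^{\T}\bx$, profile out the intercept via the Schur complement to get $(\var(\Gamma^{\T}\bx))^{-1}=\Omega^{-1}$, and push through the linear map $\hat\beta_{eh}^*=\Gamma\hat\eta$. The paper simply cites its restated Huber asymptotic result (Proposition~\ref{c0}, itself a corollary of Theorem~\ref{t2}) where you unpack the sandwich formula explicitly, and it handles the non-differentiability of $\psi$ by the same boundedness/Lipschitz empirical-process argument you invoke.
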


\begin{remark}\label{rmk1}
By the standard Huber regression asymptotic theory we have
\begin{equation*}
\sqrt{n} (\hat\beta_h - \beta^*) \overset{d}\to N\left(0,\frac{\ev_{\theta^*}[\psi^2(\epsilon)]}{(\ev_{\theta^*}[\psi'(\epsilon)])^2}\Sigma_x^{*-1}\right).
\end{equation*}
By the relation in \eqref{setup}, we have $\Sigma_x^{*-1} = \Gamma \Omega^{-1} \Gamma^{\T} + \Gamma_0 \Omega_0^{-1} \Gamma_0^{\T}$. Combining this with Proposition~\ref{p1}, we see that the enveloped Huber estimator is more efficient than the Huber estimator.  This also shows that our approach is consistent with the general theory of envelope construction by \cite{zhang2015}.
\end{remark}

\section{Estimation}
We have introduced the formulation of enveloped Huber regression and demonstrated its asymptotic efficiency over Huber regression when $\Gamma$ is known. In practice, we do not know $\Gamma$, so we must include it in the estimation procedure. Note that we do not assume the error distribution is known.

We denote the unknown parameter in classical Huber regression as $\theta^*=(\mu^*,\beta^{*\T},\vech(\Sigma_x^*)^{\T},\mu_x^{*\T})^{\T}$.
Note that the minimizer of the standard Huber regression, denoted as $(\hat\mu_h,\hat\beta_h^{\T})^{\T}$, satisfies 
\begin{equation*}
\sum_{i=1}^{n}\psi(y_i - \hat\mu_h - \bx_i^{\T}\hat{\beta}_h)(1,\bx_i^{\T})^{\T} = 0.
\end{equation*}
Since $\{\bx_i\}_{i=1}^n$ are i.i.d., natural estimates of $\mu_x^*$ and $\Sigma_x^*$ are $\bbx = \frac{1}{n}\sum_{i=1}^n \bx_i$ and $S_x = \frac{1}{n}\sum_{i=1}^{n}(\bx_i - \bbx)(\bx_i-\bbx)^{\T}$. Write $\tilde\theta = (\hat\mu_h,\hat\beta_h^{\T},\vech(S_x)^{\T},\bbx^{\T})^{\T}$. We can view $\tilde\theta = (\hat\mu_h,\hat\beta_h^{\T},\vech(S_x)^{\T},\bbx^{\T})^{\T}$ as a solution to the following estimation equations
\begin{align}\label{hn}
G_n(\theta)\triangleq\begin{pmatrix}
   \frac{1}{n}\sum_{i=1}^{n} \psi(y_i - \mu - \beta^{\T}\bx_i)(1,\bx_i^{\T})^{\T}\\
  \vech(\Sigma_x) - \vech(\frac{1}{n}\sum_{i=1}^n(\bx_i - \mu_x)(\bx_i-\mu_x)^{\T})\\
  \mu_x - \bbx
 \end{pmatrix} = 0.
\end{align}
We can rewrite the above as $\frac{1}{n}\sum_{i=1}^{n}\bg(\bz_i;\theta)=0$, where we denote 
\begin{align}\label{gs}
\bz_i &= (y_i,\bx_i^{\T})^{\T}, \n\\
\bg(\bz_i;\theta) &= (\bg_1(\bz_i;\theta)^{\T},\bg_2(\bz_i;\theta)^{\T},\bg_3(\bz_i;\theta)^{\T})^{\T}, \n\\
\bg_1(\bz_i;\theta) &= \psi(y_i - \mu - \bx_i^{\T}\beta)(1,\bx_i^{\T})^{\T},\n\\
\bg_2(\bz_i;\theta) &= \vech(\Sigma_x) - \vech((\bx_i-\mu_x)(\bx_i - \mu_x)^{\T}),\n\\
\bg_3(\bz_i;\theta) &= \mu_x - \bx_i.
\end{align} 

For the envelope model \eqref{setup}, we similarly consider the corresponding estimating equations \eqref{gee}:
\begin{align}\label{gee}
G_n(\theta)=\begin{pmatrix}
   \frac{1}{n}\sum_{i=1}^{n} \psi(y_i - \mu - \eta^{\T}\Gamma^{\T}\bx_i)(1,\bx_i^{\T})^{\T}\\
  \vech(\Gamma\Omega\Gamma^{\T}+\Gamma_0\Omega_0\Gamma_0^{\T}) - \vech(\frac{1}{n}\sum_{i=1}^n(\bx_i - \mu_x)(\bx_i-\mu_x)^{\T})\\
  \mu_x - \bbx
 \end{pmatrix}= 0. 
\end{align}
Based on the above equations, it is more convenient to reparameterize the model by using $\zeta \coloneqq (\mu,\eta^{\T},\ve(\Gamma)^{\T},\vech(\Omega)^{\T},\vech(\Omega_0)^{\T},\mu_x^{\T})^{\T}$. Note that 
$\theta = (\mu,\beta^{\T},\vech(\Sigma_x)^{\T},\mu_x^{\T})^{\T}=env(\zeta)$, where $env(\zeta) \coloneqq (\mu, (\Gamma\eta)^{\T},  \vech(\Gamma\Omega\Gamma^{\T}+\Gamma_0\Omega_0\Gamma_0^{\T})^{\T},\mu_x^{\T})$ represents the relation between $\theta$ and $\zeta$ under the envelope model \eqref{setup}.

Notice that $\zeta$ has dimension $1+p+u+\frac{p(p+1)}{2}$, but number of equations we have in \eqref{gee} is $1+2p+\frac{p(p+1)}{2}$. So \eqref{gee} may not have a solution. This possibility can be seen also by inspecting the three blocks of equations in \eqref{gee}.  The last block has a unique solution $ \mu_x = \bbx$.  Substituting this into the second block,  $\Gamma$ will be a solution if and only if its columns form a basis for a reducing subspace of $S_{x}$. However, the first block of equations may not have a solution with $\Gamma$ restricted in this way.

Instead of basing estimation on \eqref{gee}, we consider the following enveloped Huber estimator through Generalized Method of Moments (GMM):
\begin{align}\label{op}
\hat\theta_{g} = \argmin_{\theta = env(\zeta)} G_n^{\T}(\theta)\hat\Delta G_n(\theta)
\end{align}
where $\hat\Delta$ is chosen to be any consistent estimator of $(\ev_{\theta^*}[\bg(\bz;\theta^*)\bg(\bz;\theta^*)^{\T}])^{-1}$, $\theta^*$ is the true value for $\theta$, and $\bz = (y,\bx^{\T})^{\T}$. For example, $\hat\Delta = (\frac{1}{n} \sum_{i=1}^{n} \bg(\bz_i;\tilde\theta)\bg(\bz_i;\tilde\theta)^{\T})^{-1}$ where $\tilde\theta$ is the previously defined estimator from GEE. An interpretation of \eqref{op} is that we minimize the objective function $G_n^{\T}(\theta)\hat\Delta G_n(\theta)$ under the envelope model constraint, namely, $\theta$ is constrained to be equal to $env(\zeta)$ for some $\zeta$. If the constraint was removed, we would get the classical GEE estimator $\tilde\theta$ which makes the objective value exactly zero.

We can see from \eqref{setup} that $\Gamma$ is not estimable, since it can be any orthogonal basis of $\mathcal{E}_{\Sigma_{x}^*}\left(\beta^*\right)$. In fact, the estimable target is $\mathcal{E}_{\Sigma_{x}^*}\left(\beta^*\right) = \text{span}(\Gamma)$. Since $\Gamma$ is constrained to be a semi-orthogonal matrix, it seems that we need to do optimization on a Grassmann manifold, which is really computation intensive.  \cite{cook2016} proposed a reparameterization of $\Gamma$ to remove this constraint. Assuming that the upper $u\times u$ matrix of $\Gamma$ (say $\Gamma_1$) is invertible, we can express $\Gamma$ as
\begin{equation*}
\Gamma = \left( \begin{array}{c}{\Gamma_{1}} \\ {\Gamma_{2}}\end{array}\right) = \left( \begin{array}{c}{\mathrm{I}} \\ {\Gamma_{2}\Gamma_1^{-1}}\end{array}\right)\Gamma_1,
\end{equation*}
then we know $\text{span}(\Gamma)$ is the same as the span of $ \left( \begin{array}{c}{\mathrm{I}} \\ {\Gamma_{2}\Gamma_1^{-1}}\end{array}\right)$. Moreover, span of $ \left( \begin{array}{c} {-(\Gamma_{2}\Gamma_1^{-1})^{\T}} \\ {\mathrm{I}} \end{array}\right)$ is the same of $\text{span}(\Gamma_0)$. Therefore, by replacing $\eta$ now with $\Gamma_1\eta$ and $\Omega$ with $\Gamma_1\Omega\Gamma_1^{\T}$, we get a new parameterization where $\Gamma$ has the form of $\left( \begin{array}{c}{\mathrm{I}} \\ {A}\end{array}\right)$, $\Gamma_0$ has the form of $\left( \begin{array}{c}{-A^{\T}} \\ {\mathrm{I}} \end{array}\right)$, $\eta$, $\Omega$ and $\Omega_0$ are free parameter, and there is no constraint on $A$. We use this parametrization in the implementation of the enveloped Huber regression.

The optimization problem in \eqref{op} is non-convex in parameter $\zeta$. For that, we use the \texttt{fminsearch} function in the R package \texttt{neldermead}. The function uses Nelder-Mead method to find the minima of the objective function \citep{nm1965}. The initial value is crucial in non-convex optimization problems. We use the partial least squares estimator as an initial estimator in our algorithm. More details on the intimate connection between partial least squares and the envelope least squares model can be found in \cite{cook2013,cooksu2016, su2020}.

Notice that the dimension $u$ of the envelope space $\mathcal{E}_{\Sigma_{x}^*}\left(\beta^*\right)$ is also unknown in practice. To select the dimension $u$, we use the cross validation (or CV for short) method. Specifically, we randomly partition the data into $K$ groups of approximately equal size. Let $S_1, S_2, \dots, S_K$ be the sets which include the indices of data points in each group. For a fixed $u\in\{1,\dots,p\}$, let $\hat\mu^{(u,j)}$ and $\hat\beta^{(u,j)}$ be the enveloped Huber regression estimator for $\mu^*$ and $\beta^*$ using the data points whose indices are not in $S_j$. Then we choose $u\in\{1,\dots,p\}$ to minimize $CV(u) = \frac{1}{n}\sum_{j=1}^{K} \sum_{i \in S_j} \rho(y_i - \hat\mu^{(u,j)} - \bx_i^{\T}\hat\beta^{(u,j)})$. We treat $u$ as known in the asymptotic theory of the next section.

We have implemented EHR in R and the code is available upon request.

\section{Asymptotic Theory}\label{theory}
In this section we establish the asymptotic theory for the enveloped Huber regression model.
For notation convenience, let $\bw = (1,\bx^{\T})^{\T}$. For the purpose of comparison, we first provide a theorem that establishes the asymptotic normality of the joint Huber estimator, $\tilde\theta$, under a more general heteroscedastic error setting. 

\begin{theorem}\label{t2}
Assume $y_i = \mu^* + \bx_i^{\T} \beta^*+ \epsilon_i$, $\{(\bx_i,\epsilon_i)\}_{i=1}^{n}, (\bx,\epsilon)$ are i.i.d. with $\ev_{\theta^*}[\psi(\epsilon)|\bx]=0$, and $\bx \sim (\mu_x^*,\Sigma_x^*)$ with positive definite $\Sigma_x^*$ and finite fourth moments. Then
\begin{align}
&\sqrt{n} \begin{pmatrix}
   \hat\mu_h - \mu^*\\
   \hat\beta_h - \beta^*\\
  \vech(S_x) - \vech(\Sigma_x^*)\\
  \bbx - \mu_x
 \end{pmatrix} \overset{d}\to N\left(\mathbf{0},\bU_1^{-1}\bV_1\bU_1^{-1}\right),\nonumber
\end{align}
where $\bU_1$ is the block diagonal matrix $\bU_1 = (\bU_{1,ij})_{i,j=1,2,3}$ with $$\bU_{1,11} = \ev_{\theta^*}[\psi'(\epsilon)\bw\bw^{\T}], \;\bU_{1,22} = \mathrm{I}_{\frac{p(p+1)}{2}}, \;\bU_{1,33} = \mathrm{I}_{p},$$ and $\bV_1$ is the symmetric block matrix $\bV_1 = (\bV_{1,ij})_{i,j=1,2,3}$ with 
\begin{align}
&\bV_{1,11} = \ev_{\theta^*}[\psi^2(\epsilon)\bw\bw^{\T}], \;\bV_{1,22} = \var_{\theta^*}\Big(\vech\big((\bx-\mu_x^*)(\bx-\mu_x^*)^{\T}\big)\Big), \;\bV_{1,33} =  \var_{\theta^*}(\bx),\n\\ 
&\bV_{1,23} = \ev_{\theta^*}[\vech\big((\bx-\mu_x^*)(\bx-\mu_x^*)^{\T}\big) (\bx - \mu_x^*)^{\T}], \;\bV_{1,12} = 0, \;\bV_{1,13} = 0.\n
\end{align}

\end{theorem}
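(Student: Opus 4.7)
The plan is to recognize $\tilde\theta$ as a just-identified Z-estimator defined by the moment equations $G_n(\theta) = \frac{1}{n}\sum_{i=1}^{n}\bg(\bz_i;\theta) = 0$ in \eqref{gs}, and then deploy the standard sandwich asymptotics for M-estimators. First I would verify consistency $\tilde\theta \overset{P}\to \theta^*$: for the Huber block, convexity of $\rho$ together with the identification condition $\ev_{\theta^*}[\psi(\epsilon)|\bx] = 0$ and positive-definiteness of $\Sigma_x^*$ gives consistency of $(\hat\mu_h,\hat\beta_h)$ by the convexity lemma of Hjort–Pollard; consistency of $\bbx$ and $S_x$ for $(\mu_x^*,\Sigma_x^*)$ is the ordinary law of large numbers under the finite fourth-moment assumption.

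Next I would establish the linearization. Let $\bar G(\theta) \coloneqq \ev_{\theta^*}[\bg(\bz;\theta)]$. The goal is the Bahadur-type representation
\begin{equation*}
\sqrt{n}(\tilde\theta - \theta^*) = -\bU_1^{-1}\sqrt{n}\,G_n(\theta^*) + o_P(1),
\end{equation*}
which follows once we show that the empirical process remainder
\begin{equation*}
\sqrt{n}\bigl[G_n(\tilde\theta) - G_n(\theta^*)\bigr] - \sqrt{n}\bigl[\bar G(\tilde\theta) - \bar G(\theta^*)\bigr] = o_P(1).
\end{equation*}
The second and third coordinate blocks are polynomial in $\theta$, so this approximation is routine. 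For the first block, $\bg_1(\bz;\theta) = \psi(y - \mu - \bx^{\T}\beta)\bw$ is only Lipschitz (not $C^1$) in $\theta$ because $\psi$ has kinks at $\pm k$; here I would appeal to Huber's classical stochastic-equicontinuity argument (monotonicity and boundedness of $\psi$ plus finite second moments of $\bx$ suffice) to justify the approximation.

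Then I would compute $\bU_1 = \nabla \bar G(\theta^*)$. Since $\bg_1$ depends only on $(\mu,\beta)$, $\bg_2$ only on $(\vech(\Sigma_x),\mu_x)$, and $\bg_3$ only on $\mu_x$, the matrix is block triangular. The only off-diagonal block to check is $\partial \bar G_2/\partial \mu_x$ at $\theta^*$, whose terms are proportional to $\ev_{\theta^*}[\bx - \mu_x^*] = 0$, so $\bU_1$ is in fact block diagonal. Differentiating $\ev_{\theta^*}[\psi(y - \mu - \bx^{\T}\beta)\bw]$ through the expectation gives $\bU_{1,11} = \ev_{\theta^*}[\psi'(\epsilon)\bw\bw^{\T}]$ (with $\psi'(\epsilon) = \mathbb{1}(|\epsilon|\le k)$ a.e.), while $\bU_{1,22} = \mathrm{I}_{p(p+1)/2}$ and $\bU_{1,33} = \mathrm{I}_p$ are immediate from the linear structure of $\bg_2,\bg_3$.

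Finally I would compute $\bV_1 = \ev_{\theta^*}[\bg(\bz;\theta^*)\bg(\bz;\theta^*)^{\T}]$. The diagonal blocks are direct. For $\bV_{1,12}$ and $\bV_{1,13}$, note that $\bg_1(\bz;\theta^*) = \psi(\epsilon)\bw$ has conditional mean zero given $\bx$, and $\bg_2,\bg_3$ are $\bx$-measurable; conditioning on $\bx$ therefore zeros out both cross-blocks. For $\bV_{1,23}$, the constant term $\vech(\Sigma_x^*)$ is annihilated by $\ev_{\theta^*}[\bg_3] = 0$, leaving precisely $\ev_{\theta^*}[\vech\bigl((\bx-\mu_x^*)(\bx-\mu_x^*)^{\T}\bigr)(\bx-\mu_x^*)^{\T}]$. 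Applying the multivariate CLT to $\sqrt{n}\,G_n(\theta^*)$ and invoking Slutsky's theorem on the linearization yields the stated limit law. The main obstacle in this plan is controlling the linearization remainder in the presence of the kink in $\psi$; a naive Taylor expansion fails and one must instead use an empirical-process / stochastic-equicontinuity argument, though the classical Huber-regression literature supplies the necessary tools.
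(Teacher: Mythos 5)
Your proposal is correct and follows essentially the same route as the paper: the paper verifies the conditions of Theorem 3.3 of Pakes and Pollard (1989), whose key hypothesis is exactly the stochastic-equicontinuity statement $\sup_{\|\theta-\theta^*\|\le\delta_n}\|G_n(\theta)-\bar G(\theta)-G_n(\theta^*)\|=o_P(n^{-1/2})$ that you isolate as the linearization remainder, established there (as you anticipate) by an empirical-process argument exploiting the Lipschitz continuity and boundedness of $\psi$. The computations of $\bU_1$ (block diagonality, $\psi'(\epsilon)$ a.e.) and of $\bV_1$ (conditioning on $\bx$ to kill the cross blocks via $\ev_{\theta^*}[\psi(\epsilon)\mid\bx]=0$) match the paper's.
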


\begin{remark}\label{rmk2}
The proof of Theorem~\ref{t2} is relegated to section \ref{proof}. The standard asymptotic theory of the Huber regression is stated under the assumption that the error is independent of the covariates (for example, see chapter 5 of \cite{V1998}). See the asymptotic normality of the Huber regression given in remark~\ref{rmk1}. Theorem~\ref{t2} covers more cases where the error is allowed to be dependent on $\bx$. For example, a commonly used heteroscedastic regression model is
$
y_i = \ev(y_i|\bx_i)+\sigma(\bx_i)\tilde\epsilon_i,
$
where $\{\tilde\epsilon_i\}_{i=1}^n$ are i.i.d. with zero mean and independent from $\{\bx_i\}_{i=1}^n$, and $\sigma(\cdot)$ is a scale function. Then $\epsilon_i=\sigma(\bx_i)\tilde\epsilon_i$. When $\tilde\epsilon_i$ follows a symmetric distribution around zero,  $\ev_{\theta^*}[\psi(\epsilon_i) | \bx_i] =\ev_{\theta^*}[\psi(\sigma(\bx_i)\tilde\epsilon_i) | \bx_i] = 0$, which is seen by noticing $\sigma(\bx_i)\tilde\epsilon_i$ has the same conditional distribution as $\sigma(\bx_i)(-\tilde\epsilon_i)$ given $\bx_i$, and $\psi(t) = -\psi(-t)$ for any $t$. So, Theorem~\ref{t2} holds for this model when $\ev(y_i|\bx_i)$ is linear in $\bx_i$.

\end{remark}

For the asymptotic normality of the enveloped Huber regression estimator, $\hat\theta_{g}$, we assume the following regularity conditions
\begin{itemize}
\item[(i)] The support of $\zeta$ is compact, and so is the support $\Theta$ of $\theta$.  The true parameter 
$\theta^*$ is an interior point of $\Theta$. 
\item[(ii)] $\ev_{\theta^*}[\bg(\bz;\theta)]$ is differentiable at $\theta^*$ with $\frac{\partial \ev_{\theta^*}[\bg(\bz;\theta)]}{\partial \theta^{\T}}|_{\theta=\theta^*}$ having full rank and finite operator norm. The matrix $\ev_{\theta^*}[\bg(\bz;\theta^*)\bg(\bz;\theta^*)^{\T}]$ is positive definite and has finite operator norm. Among other implications, this condition implies that $\Sigma_x^*$ is positive definite and that $\bx$ has finite fourth moments (i.e. $\ev_{\theta^*}[\|\bx\|^4]<\infty$).
\item[(iii)] $y_i =\mu^*+ \bx_i^{\T} \beta^*+ \epsilon_i$, $\{(\bx_i,\epsilon_i)\}_{i=1}^{n}, (\bx,\epsilon)$ are i.i.d. with $\ev_{\theta^*}[\psi(\epsilon)|\bx]=0$. $\epsilon$ has continuous distribution and $\bx \sim (\mu_x^*,\Sigma_x^*)$. Besides, the envelope setup \eqref{setup} holds for some $\eta$ and $\Gamma$.
\end{itemize}

\begin{theorem}\label{t4}
Under the conditions $(i)-(iii)$, $\sqrt{n}(\hat\theta_{g} - \theta^*)$ converges to a multivariate normal distribution with mean zero and covariance matrix
\begin{equation*}
\mathrm{avar}(\sqrt{n}\hat\theta_{g}) = \Psi_1(\Psi_1^{\T}\mathrm{avar}(\sqrt{n}\tilde\theta)^{-1}\Psi_1)^{\dagger}\Psi_1^{\T}
\end{equation*}
where $\mathrm{avar}(\sqrt{n}\tilde\theta)$ is the covariance matrix of the limiting distribution given in Theorem \ref{t2}, and $\Psi_1 = \frac{\partial env(\zeta)}{\partial \zeta^{\T}}$ is the Jacobian matrix of $env(\cdot)$. In particular, a closed-form expression of $\Psi_1$ is 
$$\Psi_1 =  \begin{pmatrix}
  1 & 0 & 0 & 0 & 0 & 0\\
  0 &  \Gamma & \eta^{\T} \otimes \mathbf{I}_p & 0 & 0 & 0\\
  0 & 0 & 2C_p(\Gamma\Omega \otimes \mathbf{I}_p - \Gamma\otimes\Gamma_0\Omega_0\Gamma_0^{\T}) & C_p(\Gamma\otimes\Gamma)E_u & C_p(\Gamma_0 \otimes\Gamma_0)E_{p-u} & 0\\
  0 &  0 &  0 &  0 &  0 &  \mathbf{I}_p
 \end{pmatrix}.$$
\end{theorem}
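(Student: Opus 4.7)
The envelope estimator $\hat\theta_g = env(\hat\zeta)$ is a reparameterized GMM estimator with moment conditions $\bg(\bz;env(\zeta))$, so the plan is to apply standard GMM asymptotics along the envelope manifold and then push the result to $\theta$-space via the Jacobian $\Psi_1$. Consistency $\hat\theta_g \overset{P}\to \theta^*$ follows from classical GMM arguments: condition (i) supplies compactness of the parameter space, condition (ii) gives a uniform law of large numbers and a positive definite limit $\Delta^* = (\ev_{\theta^*}[\bg(\bz;\theta^*)\bg(\bz;\theta^*)^\T])^{-1}$ for $\hat\Delta$, and condition (iii) (correct specification plus the envelope setup) ensures that $\theta^*$ is the unique zero of $\ev_{\theta^*}[\bg(\bz;\theta)]$ on the envelope manifold.

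\textbf{Asymptotic expansion.} To obtain the limiting distribution, I would linearize the first-order condition for $\hat\zeta$. Using $\hat\theta_g - \theta^* = \Psi_1(\hat\zeta - \zeta^*) + o_P(\|\hat\zeta - \zeta^*\|)$ from Taylor expansion of $env$, and $G_n(\hat\theta_g) = G_n(\theta^*) + H_\theta(\hat\theta_g - \theta^*) + o_P(n^{-1/2})$ with $H_\theta = \ev_{\theta^*}[\partial \bg(\bz;\theta^*)/\partial \theta^\T]$, the first-order condition reads
\begin{equation*}
\Psi_1^\T H_\theta^\T \Delta^* \bigl[G_n(\theta^*) + H_\theta \Psi_1(\hat\zeta - \zeta^*)\bigr] = o_P(n^{-1/2}).
\end{equation*}
Let $M = \Psi_1^\T H_\theta^\T \Delta^* H_\theta \Psi_1$. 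This matrix is generically singular because the $\zeta$-parameterization is redundant (shifts of $\ve(\Gamma)$ in $\ker(\Psi_1)$, such as right rotations of $\Gamma$, leave $env(\zeta)$ invariant), so $\hat\zeta$ is determined only modulo $\ker(\Psi_1)$ while $\Psi_1(\hat\zeta - \zeta^*) = \hat\theta_g - \theta^* + o_P(n^{-1/2})$ is well defined. Projecting onto $\mathrm{range}(\Psi_1)$ with the Moore-Penrose inverse gives
\begin{equation*}
\sqrt{n}(\hat\theta_g - \theta^*) = -\Psi_1 M^\dagger \Psi_1^\T H_\theta^\T \Delta^* \sqrt{n} G_n(\theta^*) + o_P(1).
\end{equation*}
Applying the CLT to $\sqrt{n}G_n(\theta^*)$, noting $\hat\Delta \overset{P}\to \Delta^*$, and invoking the sandwich formula of Theorem~\ref{t2} (which, applied to the exactly identified $\tilde\theta$, yields $\mathrm{avar}(\sqrt{n}\tilde\theta)^{-1} = H_\theta^\T \Delta^* H_\theta$), the asymptotic covariance simplifies via $M^\dagger M M^\dagger = M^\dagger$ to $\Psi_1(\Psi_1^\T \mathrm{avar}(\sqrt{n}\tilde\theta)^{-1}\Psi_1)^\dagger \Psi_1^\T$, as claimed.

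\textbf{Computing $\Psi_1$ and the main obstacle.} For the closed-form $\Psi_1$, I would differentiate $env(\zeta) = (\mu, (\Gamma\eta)^\T, \vech(\Gamma\Omega\Gamma^\T + \Gamma_0\Omega_0\Gamma_0^\T)^\T, \mu_x^\T)^\T$ block by block. The $\mu$ and $\mu_x$ rows are identity maps; the $\beta$-row entries $\partial(\Gamma\eta)/\partial\eta^\T = \Gamma$ and $\partial(\Gamma\eta)/\partial\ve(\Gamma)^\T = \eta^\T\otimes\mathbf{I}_p$ come from standard Kronecker identities. For the $\vech(\Sigma_x)$-row, using $\vech(B) = C_p\ve(B)$ for symmetric $B$, $\ve(A\Omega A^\T) = (A\otimes A)\ve(\Omega)$, and $\ve(\Omega) = E_u\vech(\Omega)$, the $\vech(\Omega)$ and $\vech(\Omega_0)$ columns become $C_p(\Gamma\otimes\Gamma)E_u$ and $C_p(\Gamma_0\otimes\Gamma_0)E_{p-u}$. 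The $\partial/\partial\ve(\Gamma)^\T$ entry is the subtle one because $\Gamma_0$ depends implicitly on $\Gamma$ through $\Gamma^\T\Gamma_0 = 0$; writing $\Gamma_0\Omega_0\Gamma_0^\T = (\mathbf{I}_p - \Gamma\Gamma^\T)\Gamma_0\Omega_0\Gamma_0^\T(\mathbf{I}_p - \Gamma\Gamma^\T)$, differentiating, and combining with the product-rule expansion of $\Gamma\Omega\Gamma^\T$ and symmetrization by $C_p$ produces the factor $2C_p(\Gamma\Omega\otimes\mathbf{I}_p - \Gamma\otimes\Gamma_0\Omega_0\Gamma_0^\T)$. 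The \emph{main obstacle} is the careful treatment of the non-identifiability of $\zeta$, which forces the pseudo-inverse and requires checking that $\Psi_1 M^\dagger \Psi_1^\T$ is invariant to the basis $(\Gamma, \Omega, \Omega_0)$ chosen to represent the envelope; the Kronecker bookkeeping for $\Psi_1$ itself is mechanical but must be tracked carefully to recover the stated form.
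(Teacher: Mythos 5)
Your architecture is genuinely different from the paper's: you linearize a first-order condition for $\hat\zeta$ and resolve the rank-deficiency of $M=\Psi_1^{\T}H_\theta^{\T}\Delta^* H_\theta\Psi_1$ by hand with the Moore--Penrose inverse (correctly observing that $\ker M=\ker\Psi_1$, so $\Psi_1(\hat\zeta-\zeta^*)$ is well defined), whereas the paper never differentiates the sample objective: it establishes a uniform quadratic approximation of $W_n(\gamma)=q_n^{\T}(\gamma)\hat\Delta q_n(\gamma)$ (Lemmas~\ref{l3}--\ref{l4}), applies the argmax theorem to get $\hat\gamma_g\overset{d}\to\tilde\gamma$, and then invokes Shapiro's (1986) theory of overparameterized structural models to handle the redundant $\zeta$-parameterization. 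Your pseudo-inverse bookkeeping and your block-by-block derivation of $\Psi_1$ (including the implicit dependence of $\Gamma_0$ on $\Gamma$) are consistent with what the paper obtains by citing Shapiro and Cook (2018, Appendix A.6).

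The genuine gap is that your argument treats $\bg_1(\bz;\theta)=\psi(y-\mu-\bx^{\T}\beta)(1,\bx^{\T})^{\T}$ as if it were smooth in $\theta$, and the non-smoothness of the Huber score is the central technical difficulty of this theorem. Two steps fail as written. First, the expansion $G_n(\hat\theta_g)=G_n(\theta^*)+H_\theta(\hat\theta_g-\theta^*)+o_P(n^{-1/2})$ is not a Taylor expansion of $G_n$ (which is only Lipschitz, not $C^1$, in $(\mu,\beta)$); it is the stochastic-equicontinuity claim $\sup_{\|\theta-\theta^*\|\le\delta_n}\|G_n(\theta)-G(\theta)-G_n(\theta^*)\|=o_P(n^{-1/2})$ combined with differentiability of the \emph{population} map $G$, and it must be proved via empirical-process theory --- this is exactly the paper's Lemma~\ref{l1}, which verifies Andrews' type-IV / $L_2$-continuity conditions and applies Pakes--Pollard. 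Moreover, even granting that, the $o_P(n^{-1/2})$ remainder requires $\|\hat\theta_g-\theta^*\|=O_P(n^{-1/2})$, which you assume implicitly but never establish; the paper devotes Lemma~\ref{l4} to exactly this rate result, and it is not free. Second, your first-order condition $\Psi_1^{\T}H_\theta^{\T}\Delta^*[\cdots]=o_P(n^{-1/2})$ presupposes that the minimizer of the piecewise-smooth sample objective is an approximate critical point; since $\psi$ has kinks at $\pm k$, the sample gradient is discontinuous and the minimizer need not satisfy a stationarity equation, which is precisely why the paper works with the objective function itself (via $W_n$, $\varsigma_n$, and the argmax theorem) rather than with estimating equations. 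Your route can be repaired by importing the analogues of Lemmas~\ref{l1} and~\ref{l4}, but as written the proposal asserts rather than proves the two claims on which everything rests.
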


There is apparently nothing in the envelope literature that is analogous to Theorem~\ref{t4} under normal theory because of the allowed dependence between $\bx$ and $\epsilon$.  The condition $\ev_{\theta^*}[\psi(\epsilon)|\bx]=0$ provides for some ``uncoupling'' of $\bx$ and $\epsilon$ but still allows dependence in higher-order moments. As discussed in 
Remark \ref{rmk2}, the EHR model and Theorem \ref{t4} can handle a commonly used heteroscedastic regression model
$
y_i = \ev(y_i|\bx_i)+\sigma(\bx_i)\tilde\epsilon_i,
$
where $\{\tilde\epsilon_i\}_{i=1}^n$ are i.i.d. and independent from $\{\bx_i\}_{i=1}^n$, and $\sigma(\cdot)$ is a scale function. Under such a setting, $\epsilon_i=\sigma(\bx_i)\tilde\epsilon_i$ and $\ev_{\theta^*}[\psi(\epsilon_i) | \bx_i] = 0$, as long as the distribution of $\tilde \epsilon_i$ is symmetric around zero.

\begin{corollary}\label{c1}
Under the same conditions in Theorem \ref{t4}, $\mathrm{avar}(\sqrt{n}\hat\theta_{g}) \preceq \mathrm{avar}(\sqrt{n}\tilde\theta)$.
\end{corollary}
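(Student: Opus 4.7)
My approach is to read the stated formula
$$\mathrm{avar}(\sqrt{n}\hat\theta_{g}) = \Psi_1\bigl(\Psi_1^{\T}\mathrm{avar}(\sqrt{n}\tilde\theta)^{-1}\Psi_1\bigr)^{\dagger}\Psi_1^{\T}$$
as an oblique/Loewner-type projection of $\mathrm{avar}(\sqrt{n}\tilde\theta)$ and then apply a standard spectral bound. Write $A \coloneqq \mathrm{avar}(\sqrt{n}\tilde\theta) = \bU_1^{-1}\bV_1 \bU_1^{-1}$. I first need $A \succ 0$ so that $A^{1/2}$ and $A^{-1/2}$ exist; this follows directly from regularity condition (ii), which forces $\bU_1$ to have full rank and the moment matrix $\ev_{\theta^*}[\bg(\bz;\theta^*)\bg(\bz;\theta^*)^{\T}]$ (whose blocks assemble into $\bV_1$, with the mean-zero property $\bV_{1,12}=\bV_{1,13}=0$ following from $\ev_{\theta^*}[\psi(\epsilon)\mid\bx]=0$) to be positive definite.

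Once $A^{1/2}$ is in hand, the plan is to set $B \coloneqq A^{-1/2}\Psi_1$ so that $\Psi_1^{\T}A^{-1}\Psi_1 = B^{\T}B$. Using the well-known Moore--Penrose identity
$$B(B^{\T}B)^{\dagger}B^{\T} = \bP_B,$$
the orthogonal projector onto $\spn(B)$ (easily seen from the SVD of $B$, which accommodates the rank deficiency of $\Psi_1$ caused by the overparameterization in $\ve(\Gamma)$), I can rewrite
$$\Psi_1(\Psi_1^{\T}A^{-1}\Psi_1)^{\dagger}\Psi_1^{\T} = A^{1/2}B(B^{\T}B)^{\dagger}B^{\T}A^{1/2} = A^{1/2}\bP_B A^{1/2}.$$

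To finish, I would use $\bP_B \preceq \bI$ (projectors have eigenvalues in $\{0,1\}$) and conjugate by the symmetric $A^{1/2}$ to obtain
$$A^{1/2}\bP_B A^{1/2} \preceq A^{1/2}\bI\, A^{1/2} = A,$$
which is precisely $\mathrm{avar}(\sqrt{n}\hat\theta_{g}) \preceq \mathrm{avar}(\sqrt{n}\tilde\theta)$. Since equality in the Loewner order here corresponds to $\bP_B = \bI$, i.e.\ $\Psi_1$ having full row rank, the inequality is strict exactly when the envelope imposes a genuine restriction ($u < p$), which matches intuition.

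I do not anticipate any real obstacle: the argument is a coordinate-free algebraic identity that holds for any Jacobian $\Psi_1$ and any positive definite $A$, so no delicate analysis of $\psi$, $\Omega$, or $\Omega_0$ is needed. The only item requiring care is the verification of $A \succ 0$, and that is already encoded in the regularity conditions invoked by Theorem~\ref{t4}.
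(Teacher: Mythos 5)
Your argument is correct and is essentially the paper's own proof: the paper likewise sets $\Upsilon=\bU_1^{-1}\bV_1\bU_1^{-1}$ and writes $\mathrm{avar}(\sqrt{n}\tilde\theta)-\mathrm{avar}(\sqrt{n}\hat\theta_{g})=\Upsilon^{1/2}\bQ_{\Upsilon^{-1/2}\Psi_1}\Upsilon^{1/2}\succeq 0$, which is exactly your $A^{1/2}(\bI-\bP_B)A^{1/2}$ with the same Moore--Penrose projection identity. (Only your closing aside is loose: Loewner-strict inequality of the full matrices would require $\bP_B=0$, not merely $u<p$, but that remark is not part of the corollary.)
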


\begin{corollary}\label{c2}
Under the same conditions in Theorem \ref{t4}, further assume that $\bx$ and $\epsilon$ are independent and $\ev_{\theta^*}[\bx] = 0$. Let $\hat\beta_{eh}$ be the part in $\hat\theta_g$ that estimates $\beta^*$, then we have
\begin{equation*}
\sqrt{n}(\hat\beta_{eh} - \beta^*) \overset{d}\to N\left(0,\frac{\ev_{\theta^*}[\psi^2(\epsilon)]}{(\ev_{\theta^*}[\psi'(\epsilon)])^2} \Gamma\Omega^{-1}\Gamma^{\T} + (\eta^{\T}\otimes \Gamma_0)T^{\dagger}(\eta\otimes\Gamma_0^{\T})\right),
\end{equation*}
where $T = \frac{(\ev_{\theta^*}[\psi'(\epsilon)])^2}{\ev_{\theta^*}[\psi^2(\epsilon)]} \eta\eta^{\T}\otimes \Omega_0 + \Omega\otimes \Omega_0^{-1} + \Omega^{-1}\otimes\Omega_0 - 2\mathbf{I}_u\otimes \mathbf{I}_{p-u}$. By Corollary \ref{c1}, we know that $\text{avar}(\sqrt{n}\hat\beta_{eh})\preceq \text{avar}(\sqrt{n}\hat\beta_h)$. Comparing $\text{avar}(\sqrt{n}\hat\beta_{eh})$ with $\text{avar}(\sqrt{n}\hat\beta_{eh}^*)$, we can view the term $ (\eta^{\T}\otimes \Gamma_0)T^{\dagger}(\eta\otimes\Gamma_0^{\T})$ as the cost in efficiency for not knowing $\Gamma$.
\end{corollary}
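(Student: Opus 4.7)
The plan is to specialize the general formula
\[
\mathrm{avar}(\sqrt{n}\hat\theta_{g}) = \Psi_1\bigl(\Psi_1^{\T}W\Psi_1\bigr)^{\dagger}\Psi_1^{\T}
\]
of Theorem~\ref{t4}, where I write $W \coloneqq \mathrm{avar}(\sqrt{n}\tilde\theta)^{-1}$, to the independent, zero-mean setting and then extract the $\beta$-block via a selector matrix $S$, giving $\mathrm{avar}(\sqrt{n}\hat\beta_{eh}) = S\Psi_1(\Psi_1^{\T}W\Psi_1)^{\dagger}\Psi_1^{\T}S^{\T}$. My first step is to simplify $W$ using $\bx \indep \epsilon$ and $\mu_x^{*}=0$: by iterated expectation, $\bU_{1,11} = a\,\mathrm{diag}(1,\Sigma_x^{*})$ and $\bV_{1,11} = b\,\mathrm{diag}(1,\Sigma_x^{*})$ with $a \coloneqq \ev_{\theta^*}[\psi'(\epsilon)]$ and $b \coloneqq \ev_{\theta^*}[\psi^2(\epsilon)]$. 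Because $\bV_{1,12} = \bV_{1,13} = 0$ already in Theorem~\ref{t2}, $\mathrm{avar}(\sqrt{n}\tilde\theta)$ is block-diagonal between the $(\mu,\beta)$ and $(\vech(\Sigma_x),\mu_x)$ pieces, so $W$ inherits the same structure with its $(\mu,\beta)$ block equal to $(a^2/b)\,\mathrm{diag}(1,\Sigma_x^{*-1})$.

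Next I would assemble $\Psi_1^{\T}W\Psi_1$ along the partition $\zeta = (\mu,\eta,\ve(\Gamma),\vech(\Omega),\vech(\Omega_0),\mu_x)$. Inspection of $\Psi_1$ shows that $\mu$ and $\mu_x$ contribute trivial identity blocks that decouple cleanly; $\eta$ only feeds the $\beta$-row of $\theta$, $\vech(\Omega)$ and $\vech(\Omega_0)$ only feed the $\vech(\Sigma_x)$-row, and $\ve(\Gamma)$ is the unique bridge coupling both rows through the two Jacobian pieces $\eta^{\T}\otimes \mathbf{I}_p$ and $2C_p(\Gamma\Omega\otimes \mathbf{I}_p - \Gamma\otimes\Gamma_0\Omega_0\Gamma_0^{\T})$. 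Using the reducing-subspace identities $\Gamma^{\T}\Sigma_x^{*}\Gamma = \Omega$, $\Gamma_0^{\T}\Sigma_x^{*}\Gamma_0 = \Omega_0$, $\Gamma^{\T}\Gamma_0 = 0$ together with standard Kronecker identities, the sub-blocks collapse to expressions in $\Omega,\Omega_0,\eta$ and the scalars $a,b$. Taking the Schur complement of the $\ve(\Gamma)$ block relative to $(\vech(\Omega),\vech(\Omega_0),\mu_x)$ then produces the matrix $T$, whose four summands come respectively from the $(\mu,\beta)$ block of $W$ (yielding $(a^2/b)\eta\eta^{\T}\otimes\Omega_0$), from the $\vech(\Omega)$ marginal, from the $\vech(\Omega_0)$ marginal, and from the cross-interactions that project out the redundant ``rotation'' direction in $\ve(\Gamma)$ (the $-2\mathbf{I}_u\otimes\mathbf{I}_{p-u}$ term).

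Finally I would read off the $\beta$-block of $\Psi_1(\Psi_1^{\T}W\Psi_1)^{\dagger}\Psi_1^{\T}$. The $\beta$-row of $\Psi_1$ is $(0,\Gamma,\eta^{\T}\otimes \mathbf{I}_p,0,0,0)$, so the $\eta$-component produces the ``ideal'' piece $(b/a^2)\Gamma\Omega^{-1}\Gamma^{\T}$ matching Proposition~\ref{p1}, while the $\ve(\Gamma)$-component, after sandwiching the Schur complement $T^{\dagger}$, produces $(\eta^{\T}\otimes\Gamma_0)T^{\dagger}(\eta\otimes\Gamma_0^{\T})$; the factor $\Gamma_0$ emerges because the identifiable directions of $\Gamma$ are those orthogonal to $\mathrm{span}(\Gamma)$ itself. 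The principal obstacle will be verifying that the third- and fourth-moment pieces of $\bx$ entering $\bV_{1,22}$ and $\bV_{1,23}$ do \emph{not} leak into the $\beta$-block of the final answer. My plan for this is to show that only $\ve(\Gamma)$-directions of the form $\Gamma_0 \otimes \Gamma$, which are tangent to the Grassmannian at $\mathrm{span}(\Gamma)$, actually feed into the $\beta$-block, and along these directions the Schur-complement inversion is controlled by the reducing-subspace marginals $\Omega,\Omega_0$ rather than by the higher moments of $\bx$. Once this reduction is established, organizing the remaining Kronecker-product identities produces the closed form for $T^{\dagger}$ and the stated expression.
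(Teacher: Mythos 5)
Your overall skeleton is the right one and matches the paper's: specialize the sandwich formula of Theorem~\ref{t4} under independence and $\ev_{\theta^*}[\bx]=0$, and isolate the $\beta$-block, with the $\ve(\Gamma)$ directions contributing the extra term through $\eta^{\T}\otimes\mathbf{I}_p$. Your Schur-complement reduction is, up to congruence, the same linear algebra the paper performs by a different device: it replaces $\Psi_1$ by another basis $\Psi'$ of the same column space (with the $\ve(\Gamma)$ block replaced by $\eta^{\T}\otimes\Gamma_0$ and $2C_p(\Gamma\Omega\otimes\Gamma_0-\Gamma\otimes\Gamma_0\Omega_0)$), chosen so that $\Psi'^{\T}\bJ\Psi'$ is block diagonal; this simultaneously removes the rotation redundancy in $\ve(\Gamma)$ that would otherwise make your blockwise pseudo-inversion delicate. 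A small but consequential slip: the $(\mu,\beta)$ block of $W=\mathrm{avar}(\sqrt{n}\tilde\theta)^{-1}$ is $(a^{2}/b)\,\mathrm{diag}(1,\Sigma_x^{*})$, not $(a^{2}/b)\,\mathrm{diag}(1,\Sigma_x^{*-1})$; only with $\Sigma_x^{*}$ does the first summand of $T$ come out as $(a^{2}/b)\,\eta\eta^{\T}\otimes\Gamma_0^{\T}\Sigma_x^{*}\Gamma_0=(a^{2}/b)\,\eta\eta^{\T}\otimes\Omega_0$.

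The genuine gap is in your proposed resolution of what you correctly call the principal obstacle. The fourth moments of $\bx$ are \emph{not} projected out of the $\beta$-block: the summand $\Omega\otimes\Omega_0^{-1}+\Omega^{-1}\otimes\Omega_0-2\mathbf{I}_u\otimes\mathbf{I}_{p-u}$ of $T$ is precisely the residual contribution, along the identifiable $\Gamma$-directions, of the $\vech(\Sigma_x)$ information block, i.e.\ of $\var_{\theta^*}\big(\vech(\bx\bx^{\T})\big)^{-1}$. This is unavoidable because part of the information about $\spn(\Gamma)$ comes from the reducing-subspace structure of $\Sigma_x^{*}$ itself, and how much the sample covariance tells you about $\Sigma_x^{*}$ is governed by fourth moments. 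The paper obtains the stated closed form by evaluating that block under normality of $\bx$, replacing $\var_{\theta^*}(\vech(\bx\bx^{\T}))^{-1}$ with the normal-theory Fisher information $\tfrac{1}{2}E_p^{\T}(\Sigma_x^{*-1}\otimes\Sigma_x^{*-1})E_p$ (and, implicitly, using that the third-moment cross term $\bV_{1,23}$ vanishes by symmetry); the Kronecker algebra with $\Sigma_x^{*-1}=\Gamma\Omega^{-1}\Gamma^{\T}+\Gamma_0\Omega_0^{-1}\Gamma_0^{\T}$ then yields exactly the $\Omega\otimes\Omega_0^{-1}+\Omega^{-1}\otimes\Omega_0-2\mathbf{I}$ term. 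So a proof along your lines must compute this fourth-moment contribution rather than argue it away; as written, your plan would stall at that step (or, if forced through, would drop $\bZ_2$ from $T$ and produce the wrong asymptotic variance).
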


\citet[][Prop. 9]{cook2013} gave the asymptotic distribution of the maximum likelihood envelope estimator of $\beta$ when $y$ and $\bx$ are jointly normal.  The asymptotic distribution of Corollary~\ref{c2} is identical to their result except for the presence of the Huber factor $\ev_{\theta^*}[\psi^2(\epsilon)] / (\ev_{\theta^*}[\psi'(\epsilon)])^2$. This implies in part that the cost in efficiency for not knowing $\Gamma$ is the same for the normal maximum likelihood and Huber estimators. If $\var(\epsilon)$ is finite and we set the Huber factor equal to $ \var(\epsilon)$ then we get the asymptotic distribution of \cite{cook2013}.  They also showed that without normality the asymptotic distribution of the maximum likelihood envelope estimator of $\beta$ is still normal, but that its covariance matrix is complicated.  It seems remarkable that changing from maximum likelihood to Huber estimation mitigates the complexity of the asymptotic variance. It is also interesting to point out that the heteroscedastic error has another theoretical implication: the maximum likelihood envelope estimator in \cite{cook2013} may not be the most efficient even when the error is normal conditioned on $\bx$ because its asymptotic efficiency is based on the assumption that the errors are i.i.d. and are independent with $\bx$. The simulation results in section~\ref{sim2} and \ref{dim} confirm this point.

\section{Simulations}\label{simulation}
In this section, we demonstrate the estimation efficiency gains of the enveloped Huber regression (EHR) model through simulated experiments. The parameter $k$ in the Huber loss was chosen by the method described in section 2.2.

\subsection{The homoscedastic cases}\label{sim}
We consider the following simulation settings:
\begin{align}\label{homo}
y_i = 5 + \bx_i^{\T}\beta^* + \epsilon_i, i=1,\dots,n,
\end{align}
where $\bx_i,\beta^* \in \real^p$. We set $p = 12$, $u=2$, and all elements in $\beta^*$ were $0.1$.   
We fixed the sample size at $n=500$. 
The predictor $\bx$ was generated from a multivariate normal distribution with mean $\mathbf{0}$ and covariance matrix $\Sigma_x^* = \Gamma\Omega\Gamma^{\T} + \Gamma_0\Omega_0\Gamma_0^{\T}$, where $\Omega$ is a $u\times u$ diagonal matrix with diagonal elements 9 and 100, and $\Omega_0$ is the $p-u$ dimensional identity matrix. The matrix $\Gamma\in\real^{p\times u}$ is a semi-orthogonal matrix, with its $i$th row being $(-\frac{1}{\sqrt{\frac{p}{2}}},0)$ if $i$ is odd number, and the rest rows being $(0,-\frac{1}{\sqrt{\frac{p}{2}}})$. The matrix $\Gamma_0\in\real^{p\times (p-u)}$ is a semi-orthogonal matrix such that $\Gamma_0^{\T}\Gamma = 0$. The vector $\eta$ is set as $\eta=(-0.1 \sqrt{\frac{p}{2}},-0.1 \sqrt{\frac{p}{2}})^{\T}$ so that $\beta^* = \Gamma\eta$. Therefore, $\beta^*\in\text{span}(\Gamma) = \mathcal{E}_{\Sigma_{x}^*}\left(\beta^*\right)$.

We first consider the usual independent homoscedastic error cases. The simulation results under conditional heteroscedastic errors are presented in section \ref{sim2} and \ref{dim}. Six different error distributions were considered: 
\begin{itemize}
\item the standard normal distribution $N(0,1)$;
\item student's t-distribution with 3 degrees of freedom $t_3$;
\item the normal mixture $0.9N(0,1)+0.1N(0,25)$; 
\item the standard Laplace distribution $Laplace(0,1)$ with density function $\frac{1}{2}e^{-|x|},x\in\real$; 
\item a symmetric Gamma distribution $sGamma(2,2)$, which is the distribution of a random variable $ZV$ where $Z$ equals $1$ and $-1$ each with probability $\frac{1}{2}$, $V$ is independent of $Z$ and follows a Gamma distribution with shape and scale parameters both being 2. We consider such a distribution because it is bimodal;
\item the standard Cauchy distribution $Cauchy(0,1)$.
\end{itemize}

The asymptotic theory in section \ref{theory} (see Corollary \ref{c2}) indicates that the Huber factor $\frac{\ev_{\theta^*}[\psi^2(\epsilon)]}{(\ev_{\theta^*}[\psi'(\epsilon)])^2}$ plays an critical role in determining asymptotic efficiency of EHR. In comparison, $\var(\epsilon)$ is the corresponding term that affects the efficiency of standard envelope estimator (ENV). We provide the Huber factor and error variance for each of the six distributions as well as the ratio of these two factors in Table \ref{t_0}. Note that for Cauchy distribution the second moment does not exist so the corresponding $\var(\epsilon)$ is $\infty$.

\begin{table}[!h]
\centering
\caption{Huber factor and error variance under six different distributions.}
\begin{tabular}{@{}cccc@{}}
\toprule
 & $\frac{\ev_{\theta^*}[\psi^2(\epsilon)]}{(\ev_{\theta^*}[\psi'(\epsilon)])^2}$ &  $\var(\epsilon)$ 			& $\frac{\var(\epsilon)}{\ev_{\theta^*}[\psi^2(\epsilon)] / (\ev_{\theta^*}[\psi'(\epsilon)])^2}$      \\ \midrule
$N(0,1)$   & $1.05$ &   $1$ & $0.95$           \\
$t_3$       &  $1.59$  &  $3$    & $1.89$      \\ 
$0.9N(0,1)+0.1N(0,25)$   & $1.38 $         &  $3.4$     	     & $2.47$       \\ 
$Laplace(0,1)$     & $1.43$       & $2 $             & $1.39$       \\
$sGamma(2,2)$  & $23.82$ 	& $24$  &$1.01$ \\
$Cauchy(0,1)$ &$3.51 $ & $\infty $ & $\infty$  \\ \bottomrule
\end{tabular}
\label{t_0}
\end{table}

In this study, the envelope dimension $u$ was fixed at the true value. We generated $100$ independent datasets from the simulation model. For each replication, we computed the EHR estimator, the Huber regression (HR) estimator, the standard envelope estimator (ENV) which is based on normal likelihood, and the ordinary least squares estimator (LS). We compared the estimation accuracy of the four estimators. For an estimator generically denoted as $\hat\beta$, its estimation loss is defined as $(\hat\beta  - \beta^*)^{\T}(\hat\beta  - \beta^*)$. Table \ref{t_1} shows the estimation MSE averaged from the 100 replications. Table \ref{t_1} clearly demonstrates the efficiency gains achieved by the EHR estimator. When the error distribution is not normal, EHR substantially outperforms other estimators in terms of estimation. When the error distribution is normal, EHR still outperforms LS and HR, while maintaining a very low accuracy loss compared to ENV. Notice that when the error follows the Cauchy distribution, ENV and LS completely fail, while HR and EHR are still serviceable. 

\begin{table}[!h]
\centering
\caption{Comparison of estimation MSE ($\times 10^{-2}$) for simulation models described in (\ref{homo}). The results are based on 100 replications. The standard errors are listed in the parentheses ($\times 10^{-2}$). $u$ is fixed at the true value. $n$ is fixed at 500. ``mixnorm'' stands for the normal mixture $0.9N(0,1)+0.1N(0,25)$.}
\begin{tabular}{@{}ccccc@{}}
\toprule
$\epsilon$ & EHR &  	ENV 			& HR    & LS      \\ \midrule
 
$N(0,1)$   & $0.0377$ &   $0.0365$ & $2.25$   & $2.16$         \\
 &$(0.0023)$ &$(0.0023)$ &$(0.10)$ &$(0.10)$ \\
$t_3$       &  $0.050$  &  $0.08$    & $3.46$     & $6.06$          \\ 
&$(0.005)$&$(0.01)$&$(0.18)$&$(0.34)$\\
mixnorm   & $0.050 $        &  $0.10$     	     & $3.07$    & $7.35 $       \\ 
&$(0.004)$&$(0.01)$&$(0.16)$&$(0.38)$ \\
$Laplace(0,1)$     & $0.054$       & $0.07 $             & $3.08$   & $4.16$        \\
&$(0.004)$& $(0.01)$&$(0.13)$&$(0.19)$\\
$sGamma(2,2)$  & $0.63$ 	& $9.57$  &$51.58$ & $51.55$ \\
&$(0.07)$&$(2.38)$&$(2.19)$&$(2.34)$\\
$Cauchy(0,1)$ &$1.91 $ & $1.29\times 10^5 $ & $8.06$ & $2.02\times 10^5$ \\
&$(0.44)$&$(9.92\times 10^4)$&$(0.42)$&$(1.57\times 10^5)$\\ \bottomrule
\end{tabular}
\label{t_1}
\end{table}

\subsection{The conditional heteroscedastic cases}\label{sim2}
Our theory from section \ref{theory} demonstrates that the EHR estimator still enjoys asymptotic normality under conditional heteroscedastic errors. 
We conducted a set of simulations to confirm the asymptotic theory of EHR under conditional heteroscedasticity. We generated $n$ i.i.d. samples $\{(y_i,\bx_i)\}_{i=1}^n$ from the following heteroscedastic regression model 
\begin{align}\label{hmod}
y_i = \mu^*+\bx_i^{\T}\beta^*+\epsilon_i,  \epsilon_i = \sigma(\bx_i)\tilde\epsilon_i.
\end{align}
Here $\{\tilde\epsilon_i\}_{i=1}^n$ is independent of $\{\bx_i\}_{i=1}^n$ and the following three distributions of $\tilde\epsilon_i$ were considered: the standard normal distribution $N(0,1)$; student's t-distribution with 3 degrees of freedom $t_3$; and the normal mixture $0.9N(0,1)+0.1N(0,25)$. As before, we set $\mu^*=5$, $p=12$, $u=2$, and the settings for distribution of $\bx_i$ and $\beta^*$ were exactly the same as we introduced in section \ref{sim}. We set $\sigma(\bx) = x_1+x_{12}$ and $\sigma(\bx) = x_1*x_{12}$.
Similar linear scale functions have been considered in \cite{wang2012quantile,gu2016} which studies high dimensional linear heteroscedastic regression. Our theory can deal with both linear and nonlinear scale functions.

The envelope dimension $u$ was fixed at the true value. The sample size $n$ is 500. We compared the estimation MSE for the four estimators. From Table \ref{t_2a} and Table \ref{t_2b}, we can see that EHR substantially outperforms all its competitors under the case of conditional heteroscedasticity for all the six distributions. It is not surprising to see that ENV and LS fail when the distribution of $\tilde\epsilon$ is Cauchy. Moreover, we observe that even when $\tilde\epsilon$ follows a normal distribution, ENV does not perform  well compared with EHR. ENV which is based on normal likelihood \citep{cook2013} is the most efficient under the homoscedastic normal error assumption. However, our simulation results indicate that ENV is not the most efficient under conditional heteroscedastic normal errors.

%ENV is shown to be the most efficient under a homoscedastic normal error assumption \citep{cook2013,cook2019}. However, our simulation results indicate that ENV %is not the most efficient under conditional heteroscedastic normal errors. 

\begin{table}[!h]
\centering
\caption{Comparison of estimation MSE for simulation model (\ref{hmod}) $y_i = \mu^*+\bx_i^{\T}\beta^*+\sigma(\bx_i)\tilde\epsilon_i$. The scale function is $\sigma(\bx) = x_1+x_{12}$. The results are based on 100 replications. The distributions of $\tilde\epsilon$ are listed in the first column. The standard errors are listed in the parentheses. $u$ is fixed at the true value. $n$ is 500. ``mixnorm'' stands for the normal mixture $0.9N(0,1)+0.1N(0,25)$.}
\begin{tabular}{@{}ccccc@{}}
\toprule
$\tilde\epsilon$ & EHR &  	ENV 			& HR    & LS      \\ \midrule
 
$N(0,1)$   & $0.005 $ 	&  $0.04 $    	& $0.20 $    	& $0.38 $          \\
 &$(0.001)$		&$(0.01)$		&$(0.01)$		&$(0.02)$ \\
$t_3$       & $0.006 $ 		&  $0.55 $    		& $0.34 $    	& $1.34 $         \\ 
&$(0.001)$		&$(0.09)$			&$(0.01)$		&$(0.09)$\\
mixnorm   & $0.007 $ 	&  $0.62 $    	& $0.27$    	& $1.40$       \\ 
&$(0.002)$	  	&$(0.09)$		&$ (0.02)$		&$ (0.08)$ \\ 
$Laplace(0,1)$     & $0.0040$       & $0.30$             & $0.26$   & $0.96$        \\
&$(0.0005)$& $(0.05)$&$(0.01)$&$(0.05)$\\
$sGamma(2,2)$  & $0.27$ 	& $5.86$  &$5.44$ & $9.48$ \\
&$(0.08)$&$(0.45)$&$(0.25)$&$(0.41)$\\
$Cauchy(0,1)$ &$0.021 $ & $3.10\times 10^4 $ & $0.67$ & $4.28\times 10^4$ \\
&$(0.003)$&$(2.05\times 10^4)$&$(0.03)$&$(2.73\times 10^4)$\\ \bottomrule
\end{tabular}
\label{t_2a}
\end{table}

\begin{table}[!h]
\centering
\caption{Comparison of estimation MSE for simulation model (\ref{hmod}) $y_i = \mu^*+\bx_i^{\T}\beta^*+\sigma(\bx_i)\tilde\epsilon_i$.  The scale function is $\sigma(\bx) = x_1*x_{12}$. The results are based on 100 replications. The distributions of $\tilde\epsilon$ are listed in the first column. The standard errors are listed in the parentheses. $u$ is fixed at the true value. $n$ is 500. ``mixnorm'' stands for the normal mixture $0.9N(0,1)+0.1N(0,25)$.}
\begin{tabular}{@{}ccccc@{}}
\toprule
$\tilde\epsilon$ & EHR &  	ENV 			& HR    & LS      \\ \midrule
 
$N(0,1)$   & $0.005$ 	&  $0.27$    	& $0.17$    	& $0.92$          \\
 &$(0.001)$		&$(0.04)$		&$(0.01)$		&$(0.05)$ \\
$t_3$       & $0.012$ 		&  $1.28$    		& $0.21$    	& $2.46$         \\ 
&$(0.002)$		&$(0.17)$			&$(0.01)$		&$(0.18)$\\
mixnorm   & $0.011$ 	&  $1.48$    	& $0.21$    	& $2.95$       \\ 
&$(0.003)$	  	&$(0.27)$		&$(0.01)$		&$(0.28)$ \\ 
$Laplace(0,1)$     & $0.010$       & $0.73$             & $0.20$   & $1.77$        \\
&$(0.003)$& $(0.09)$&$(0.01)$&$(0.08)$\\
$sGamma(2,2)$  & $0.12$ 	& $15.55$  &$5.26$ & $23.49$ \\
&$(0.02)$&$(1.00)$&$(0.24)$&$(1.14)$\\
$Cauchy(0,1)$ &$0.03$ & $6.49\times 10^4$ & $0.50$ & $8.24\times 10^4$ \\
&$(0.01)$&$(5.90\times 10^4)$&$(0.03)$&$(7.52\times 10^4)$\\ \bottomrule
\end{tabular}
\label{t_2b}
\end{table}

\subsection{Cross validation selection of the dimension $u$}\label{dim}

In practice, we use cross-validation (or CV for short) to select $u$. We now examine the selection performance of 5-fold CV. First, we consider the homoscedastic simulation settings described in section \ref{sim}. For demonstration purpose, we used the standard normal, $t_3$ and the mixture normal distribution as examples. 
Table \ref{t_3} shows the frequency that the true dimension ($u=2$) was selected by different methods in different cases. Overall, we see that CV works reasonably well as a selection method with finite sample sizes.

\begin{table}[!h]
\centering
\caption{Frequency of selecting the true $u$ for simulation models described in (\ref{homo}). The results are based on 100 replications. 5-fold CV was used to select $u$ in EHR and ENV. ``mixnorm'' stands for the normal mixture $0.9N(0,1)+0.1N(0,25)$.}
\begin{tabular}{@{}cccccc@{}}
\toprule
	$n$		&				&$\epsilon\sim N(0,1)$ &  $\epsilon\sim t_3$ 			 & $\epsilon\sim \text{mixnorm}$                     \\ \midrule
 	\multirow{2}{*}{100}	& EHR		&  $82\%$    			& $91\%$    		& $90\%$                  \\
					&ENV  				&  $85\%$    			& $79\%$    		& $82\%$              \\ \midrule
 \multirow{2}{*}{500}	& EHR						&  $94\%$    			& $93\%$    		& $92\%$                  \\
							&ENV  				&  $89\%$    			& $90\%$    		& $87\%$              \\  \bottomrule
\end{tabular}
\label{t_3}
\end{table}

We further examined the estimation accuracy of CV-tuned envelope Huber estimators under the same model setup. Using $N(0,1)$, $t_3$ and mixture normal distributions as examples, we compared the estimation MSE of the four estimators, with EHR and ENV using their selected $u$. Table \ref{t_5} shows the estimation MSE from which we can see that the CV-tuned EHR estimator is the overall winner.

\begin{table}[!h]
\centering
\caption{Comparison of estimation MSE ($\times 10^{-2}$) with $u$ selected by CV, in the case of homoscedastic error. The model is described in (\ref{homo}). The results are based on 100 replications. Distributions of the error $\epsilon$ are shown in the second column. The standard errors are listed in the parentheses ($\times 10^{-2}$). 5-fold CV was used to select $u$ in EHR and ENV. ``mixnorm'' stands for the normal mixture $0.9N(0,1)+0.1N(0,25)$.}
\begin{tabular}{@{}ccccccccc@{}}
\toprule
        $n$                  & $\epsilon$ & EHR & ENV & HR & LS \\ \midrule
\multirow{6}{*}{$100$} 
			& $N(0,1)$   & $2.57$ 	&  $2.43$      & $12.96$    	& $ 12.32$     \\
&&$(0.63)$&$(0.64)$  	&$(0.63)$&$(0.59)$\\
			& $t_3$   & $2.74 $ 	&  $5.45 $      & $21.38 $    	& $37.72 $     \\
&&$(1.00)$&$(1.36)$  	&$(1.22)$&$(3.02)$\\
                         & mixnorm  	& $2.01 $ 	&  $5.91 $       		& $17.42 $    	& $39.81 $ \\ 
                         &&$(0.65)$&$(1.96)$   	&$(0.92)$&$(2.37)$\\ \midrule
\multirow{6}{*}{$500$} 
			& $N(0,1)$   & $0.22$ 	&  $ 0.36$      & $2.07$    	& $1.94$     \\
&&$(0.07)$&$(0.10)$  	&$(0.07)$&$(0.07)$\\
			& $t_3$   & $0.41 $ 	&  $0.95 $       		& $3.46$     & $6.06$     \\
&&$(0.13)$&$(0.27)$  	&$(0.18)$&$(0.34)$\\
                         & mixnorm  	& $0.41 $ 	&  $1.66 $    		& $3.07$    & $7.35 $ \\
                         &&$(0.14)$&$(0.43)$  &$(0.16)$&$(0.38)$\\  \bottomrule
\end{tabular}
\label{t_5}
\end{table}

Next, we consider the heteroscedastic simulation settings described in section \ref{sim2}, i.e. model \eqref{hmod}. Again, we considered both linear and nonlinear scale functions: $\sigma(\bx) = x_1+x_{12}$ or $\sigma(\bx) = x_1* x_{12}$. In Table \ref{t_6.5} we present the frequency of the true $u$ being selected by 5-fold CV.  Similar to the homoscedastic cases, CV works very well for EHR as a selection method with finite sample size.
We also examined the estimation accuracy for EHR, ENV, HR and LS, where  EHR and ENV use the CV-tuned dimension. 
%For each combination of scale function, error distribution and sample size, we did 100 replications. 
We summarize the estimation MSE for the four methods  in Table \ref{t_6.2} and Table~\ref{t_6.4}. It is clear that CV-tuned EHR significantly outperforms all the competing methods in all cases. 
%Again, we observe that even when $\tilde\epsilon$ is normal (i.e., $\epsilon$ is conditionally normal), ENV is not the most efficient.

\begin{table}[!h]
\centering
\caption{Frequency of selecting the true $u$ under model (\ref{hmod}) $y_i = \mu^*+\bx_i^{\T}\beta^*+\sigma(\bx_i)\tilde\epsilon_i$. The results are based on 100 replications. 5-fold CV was used to select $u$. Case \rn{1}: $\sigma(\bx) = x_1+x_{12}$; Case \rn{2}: $\sigma(\bx) = x_1*x_{12}$. ``mixnorm'' stands for the normal mixture $0.9N(0,1)+0.1N(0,25)$.}
\begin{tabular}{@{}cccccc@{}}
\toprule
Case	&	$n$		&				&$\tilde\epsilon\sim N(0,1)$ &  $\tilde\epsilon\sim t_3$ 			 & $\tilde\epsilon\sim \text{mixnorm}$                     \\ \midrule
\multirow{4}{*}{\rn{1}} 	&\multirow{2}{*}{100}	& EHR					&  $0.71$    			& $0.62$    		& $0.57$                  \\
		&					&ENV  				&  $0.13$    			& $0.11$    		& $0.15$              \\ \cmidrule{3-6}
& \multirow{2}{*}{500}	& EHR						&  $0.93$    			& $0.84$    		& $0.81$                  \\
&							&ENV  				&  $0.68$    			& $0.33$    		& $0.24$              \\ \midrule
\multirow{4}{*}{\rn{2}} 	&\multirow{2}{*}{100}	& EHR					&  $0.55$    			& $0.56$    		& $0.59$                  \\
		&					&ENV  				&  $0.13$    			& $0.09$    		& $0.10$              \\ \cmidrule{3-6}
& \multirow{2}{*}{500}	& EHR 						&  $0.86$    			& $0.83$    		& $0.83$                  \\
&							&ENV  				&  $0.33$    			& $0.24$    		& $0.12$              \\ \bottomrule
\end{tabular}
\label{t_6.5}
\end{table}

\begin{table}[!h]
\centering
\caption{Comparison of estimation MSE with $u$ selected by CV, in the case of conditional heteroscedastic error. The model is (\ref{hmod}) $y_i = \mu^*+\bx_i^{\T}\beta^*+\sigma(\bx_i)\tilde\epsilon_i$ with $\sigma(\bx) = x_1+x_{12}$. The distribution of $\tilde\epsilon$ is given in the third column. The results are based on $100$ replications. The standard errors are listed in the parentheses. $u$ is selected by 5-fold CV in EHR and ENV. ``mixnorm'' stands for the normal mixture $0.9N(0,1)+0.1N(0,25)$.}
\begin{tabular}{@{}cccccccccccccc@{}}
\toprule
          $n$               & $\tilde\epsilon$  & EHR & ENV  & HR & LS \\ \midrule
\multirow{6}{*}{$100$} 
			& $N(0,1)$   	& $0.13$			&  $0.83$	    	& $1.26$    	& $2.22$     \\
		 	&    	       		& $(0.05)$			&  $(0.16)$	    	& $(0.07)$    	& $(0.12)$     \\
			& $t_3$   		& $0.17$			&  $2.86$	    	& $1.90$    	& $6.22$     \\
		 	&    	       		& $(0.06)$			&  $(0.60)$	    	& $(0.12)$    	& $(0.50)$     \\
                       & mixnorm  	& $0.14$			&  $3.94$	    	& $1.86$    	& $8.49$     \\
                       &			& $(0.06)$			&  $(0.60)$	    	& $(0.11)$    	& $(0.83)$     \\ \midrule
\multirow{6}{*}{$500$} 
			& $N(0,1)$   		&$0.03$	&$0.13$	    	& $0.20 $    	& $0.38 $     \\
		 	&    	       			&$(0.01)$	&$(0.03)$		&$(0.01)$		&$(0.02)$\\
			& $t_3$   			&$0.05$	&$0.27$	    		& $0.34 $    	& $1.34 $     \\
			&				&$(0.01)$	&$(0.07)$			&$(0.01)$		&$(0.09)$\\
                       & mixnorm 			&$0.05$	&$0.29$	    	& $0.27$    	& $1.40$ \\ 
                       &				&$(0.01)$	&$(0.06)$	&$ (0.02)$		&$ (0.08)$\\ \bottomrule
\end{tabular}
\label{t_6.2}
\end{table}

\begin{table}[!h]
\centering
\caption{Comparison of estimation MSE with $u$ selected by CV, in the case of conditional heteroscedastic error. The model is (\ref{hmod}) $y_i = \mu^*+\bx_i^{\T}\beta^*+\sigma(\bx_i)\tilde\epsilon_i$  with $\sigma(\bx) = x_1*x_{12}$. The distribution of $\tilde\epsilon$ is given in the second column. The results are based on $100$ replications. The standard errors are listed in the parentheses. $u$ is selected by 5-fold CV in EHR and ENV. ``mixnorm'' stands for the normal mixture $0.9N(0,1)+0.1N(0,25)$.}
\begin{tabular}{@{}cccccccccccccc@{}}
\toprule
$n$                       & $\tilde\epsilon$  & EHR	& ENV  & HR & LS \\ \midrule
\multirow{6}{*}{$100$} 
			& $N(0,1)$   	& $0.11$			&  $1.44$	    	& $1.19$    	& $5.71$     \\
		 	&    	       		& $(0.02)$			&  $(0.27)$	    	& $(0.08)$    	& $(0.46)$     \\
			& $t_3$   		& $0.08$			&  $8.59$	    	& $1.77$    	& $15.65$     \\
		 	&    	       		& $(0.02)$			&  $(2.50)$	    	& $(0.12)$    	& $(4.59)$     \\
                       & mixnorm  	& $0.08$			&  $9.22$	    	& $1.74$    	& $17.81$     \\
                       &			& $(0.02)$			&  $(1.38)$	    	& $(0.12)$    	& $(2.00)$     \\ \midrule
\multirow{6}{*}{$500$} 
			& $N(0,1)$   	&$0.03$	&$0.20$	  	& $0.17 $    	& $0.92 $     \\
		 	&    	       		&$(0.01)$	&$(0.05)$		&$(0.01)$		&$(0.05)$\\
			& $t_3$   		&$0.04$	&$0.62$	   		& $0.21 $    	& $2.46 $     \\
			&			&$(0.01)$	&$(0.14)$		&$(0.01)$		&$(0.18)$\\
                       & mixnorm  	&$0.03$	&$0.96$	    	& $0.21$    	& $2.95$ \\ 
                       &			&$(0.01)$	&$(0.19)$		&$ (0.01)$		&$ (0.28)$\\ \bottomrule
\end{tabular}
\label{t_6.4}
\end{table}

 \section{A Real Data Example}

We applied the EHR model to analyze a real dataset, ``statex77'', which is contained in the \texttt{datasets} package in R. The dataset includes eight measurements for 50 states in United States, which are population, income, illiteracy, life expectancy, murder rate, percentage of high-school graduates, frost level and land area. We treated the murder rate as the response variable and treated the other variables as the predictors. A preliminary look at the data suggests that the murder rate has a heavy tail distribution, indicating the presence of outliers. Since the predictors have different measurement units, they were rescaled to have unit empirical standard deviation. We applied CV to choose the envelope dimension $u$ and then fitted the EHR model with the selected $u$. For comparison, we computed the ENV estimator with its selected $u$, the HR estimator and the LS estimators. Bootstrap was used to estimate the standard deviation of the four estimators. The ratios of bootstrapped standard deviation can be viewed as the measure for estimation efficiency comparison.   

\begin{table}[!h]
\centering
\caption{Analysis of ``statex77'' dataset using the EHR estimator, the HR estimator, the ENV estimator and the LS estimator. Columns 1 gives the dimension $u$ that is selected by CV in EHR model. Since $\beta\in\real^7$, we calculated the ratio of the bootstrapped standard deviations of other estimators versus the EHR estimator for each component in $\beta$. Columns 2-3 contain the range and the average of the bootstrap standard deviations ratios of the HR estimator versus the EHR estimator. Columns 4-5 contain the range and the average of the bootstrap standard deviations ratios of the ENV estimator versus the EHR estimator. Columns 6-7 contain the range and the average of the bootstrap standard deviations ratios of the LS estimator versus the EHR estimator.}
\begin{tabular}{cccccccccc}
\toprule
\multirow{2}{*}{$\hat u$} &  & \multicolumn{2}{c}{HR to EHR} &  & \multicolumn{2}{c}{ENV to EHR} &  & \multicolumn{2}{c}{LS to EHR} \\ \cline{3-4} \cline{6-7} \cline{9-10} 
                          &  & Range         & Average        &  & Range        & Average        &  & Range        & Average        \\ \midrule
          1                &  &     $1.74$--$3.91$          &  $2.68$              &  &     $1.24$--$3.42$         &      $2.15$          &  &         $1.50$--$3.44$     &     $2.38$         \\ \bottomrule
\end{tabular}
\label{t_7}
\end{table}

From Table \ref{t_7} we see massive efficiency gains of the EHR estimator over all the other estimators. CV selects $u=1$ in the EHR model. 
There are seven covariates in the dataset. For each component in $\beta$, we computed the ratio of the bootstrapped standard deviations of other estimators versus the EHR estimator, which are summarized in columns 2-7 of Table \ref{t_7}. The EHR method also reveals different findings from the other models. The significant predictors from different methods are summarized in Table \ref{t_8}. LS and HR method have two significant predictors while ENV has four. However, under the EHR model, five among seven predictors are significant. The EHR model suggests that illiteracy has significantly positive association with murder rate, while income, life expectancy, percentage of high-school graduates and frost (number of days with minimum temperature below freezing) have significantly negative association with murder rate. The difference between EHR and other models can be explained by the efficiency gain of the EHR estimator that allows it to detect weaker signals in the data. 

\begin{table}[!h]
\centering
\caption{Significant predictors in different methods}
\begin{tabular}{|c|c|}
\hline
 EHR		&  income, illiteracy, life expectancy, percentage of high-school graduates, frost level   \\ \hline
ENV  				& population, illiteracy, life expectancy, land area                  \\ \hline
HR				&population, life expectancy                   \\ \hline
LS 				& population, life expectancy      \\ \hline
\end{tabular}
\label{t_8}
\end{table}

\section*{Appendix: Proofs}\label{proof}

\begin{proof}[Proof of Proposition \ref{equivdef}]
Recall the model (\ref{Hubermodel}) states that we have $y = \mu^*+\bx^{\T}\beta^* + \epsilon$ with $\ev[\psi(\epsilon)|\bx]=0$. We prove the statements (i) and (ii) are equivalent.

(ii) $\implies$ (i): Since $\beta^*\in S$, we have $y = \mu^*+\bx^{\T}\bP_S\beta^*+\epsilon$. By definition of $\ev^{\rho}[\cdot]$, we have $\ev^{\rho}[y|\bP_S\bx] = \mu^* + \bx^{\T}\bP_S\beta^* + \ev^{\rho}[\epsilon|\bP_S\bx]$. Since $\ev[\psi(\epsilon)|\bx] = 0$, we have $\ev[\psi(\epsilon)|\bP_S\bx] = \ev\big[\ev[\psi(\epsilon)|\bx]|\bP_S\bx\big] = 0$, therefore $\ev^{\rho}[\epsilon|\bP_S\bx]=0$. So $\ev^{\rho}[y|\bP_S\bx] = \mu^* + \bx^{\T}\bP_S\beta^* = \mu^* + \bx^{\T}\beta^* = \ev^{\rho}[y|\bx]$.

(i) $\implies$ (ii): Since $\ev^{\rho}[y|\bx] = \ev^{\rho}[y|\bP_S\bx]$, we have $\mu^* + \bx^{\T}\bP_S\beta^* + \bx^{\T}\bQ_S\beta^* = \ev^{\rho}[(\mu^*+\bx^{\T}\bP_S\beta^*+\bx^{\T}\bQ_S\beta^*+\epsilon) | \bP_S\bx]$, which is equivalent to $\bx^{\T}\bQ_S\beta^* = \ev^{\rho}[(\bx^{\T}\bQ_S\beta^* + \epsilon)|\bP_S\bx]$. This implies that $\bx^{\T}\bQ_S\beta^*$ is a function of $\bP_S\bx$, i.e. $\bx^{\T}\bQ_S\beta^* = f(\bP_S\bx)$ for some fixed univariate function $f(\cdot)$. If $\bQ_S\beta^*\ne 0$, then consider the vector $\tilde\bx \coloneqq \bx+\bQ_S\beta^*$. We have $\bP_S\tilde\bx = \bP_S\bx + \bP_S\bQ_S\beta^* = \bP_S\bx$. On the other hand, $f(\bP_S\tilde\bx)= \tilde\bx^{\T}\bQ_S\beta^* = \bx^{\T}\bQ_S\beta^* + \|\bQ_S\beta^*\|^2 > \bx^{\T}\bQ_S\beta^* = f(\bP_S\bx) = f(\bP_S\tilde\bx)$, so we reach a contradiction. Therefore $\bQ_S\beta^*=0$, which means $\beta^*\in S$.
\end{proof}

\begin{proof}[Proof of Theorem \ref{t2}]
For ease of notation, we denote $\bw_i = (1,\bx_i^{\T})^{\T}$, $\bw = (1,\bx^{\T})^{\T}$, and $G(\theta) = \ev_{\theta^*}[\bg(\bz;\theta)]$. To derive the asymptotic distribution of $\tilde\theta$, we will apply Theorem 3.3 of \cite{pakes1989}. To apply the theorem we need to verify its conditions (i)-(v). The condition (i) is satisfied since $G_n(\tilde\theta) = \mathbf{0}$. The conditions (ii) and (v) automatically holds given our condition (i). By our condition (ii) and Central Limit Theorem, we have $\sqrt{n}(G_n(\theta^*) - G(\theta^*)) \overset{d}\to N(0,\ev_{\theta^*}[\bg(\bz;\theta^*)\bg(\bz;\theta^*)^{\T}])$, thus the condition (iv) holds since $G(\theta^*) = \mathbf{0}$. By our Lemma \ref{l1}, we have $$\sup _{\theta :\left\|\theta-\theta^*\right\| \leq \delta_{n}}\frac{\left\|G_{n}(\theta)-G(\theta)-G_{n}\left(\theta^*\right)\right\|}{n^{-1/2} + \|G_{n}(\theta)\| + \|G(\theta)\|} = o_p(1)$$ for every sequence $\{\delta_n\}$ of positive numbers that converges to zero. Thus the condition (iii) holds. Let $\bw_i = (1,\bx_i^{\T})^{\T}$. Therefore, we have finished verifying conditions (i)-(v) in Theorem 3.3 of \cite{pakes1989}. In addition, it is easy to show $\tilde\theta\overset{P}\to \theta^*$, for example, through an argument similar to the proof our Lemma \ref{l2}. Now, applying their Theorem 3.3 we have $\sqrt{n}(\tilde\theta - \theta^*) \overset{d}\to N(\mathbf{0},(\bU_1^{\T}\bU_1)^{-1}\bU_1^{\T}\bV_1\bU_1(\bU_1^{\T}\bU_1)^{-1})$, where $\bU_1 = \frac{\partial \ev_{\theta^*}[\bg(\bz;\theta)]}{\partial \theta^{\T}}|_{\theta=\theta^*}$, and $\bV_1 = \ev_{\theta^*}[\bg(\bz;\theta^*)\bg(\bz;\theta^*)^{\T}]$. Recall that $\bg_1(\bz;\theta) = \psi(y - \mu - \bx^{\T}\beta)(1,\bx^{\T})^{\T}$. Since $\psi(\cdot)$ has a.e. derivative that is bounded by 1, we have $\frac{\partial \ev_{\theta^*}[\bg_1(\bz;\theta)]}{\partial (\mu, \beta^{\T})}|_{\theta=\theta^*} = \ev_{\theta^*}[\psi'(\epsilon)\bw\bw^{\T}]$. Then it is easy to give the expression of $\bU_1$ as $\bU_1 = \begin{pmatrix}
  \ev_{\theta^*}[\psi'(\epsilon)\bw\bw^{\T}] & 0 & 0 \\
  0 &  \mathrm{I}_{\frac{p(p+1)}{2}} & 0\\
  0 & 0 & \mathrm{I}_{p} 
 \end{pmatrix}$.
Next we give the expression of $\bV_1 = (\bV_{1,ij})_{i,j=1,2,3}$. It is easy to check
\begin{align}
\bV_{1,11} &= \ev_{\theta^*}[\bg_1(\bz;\theta^*)\bg_1(\bz;\theta^*)^{\T}] = \ev_{\theta^*}[\psi^2(\epsilon)\bw\bw^{\T}]\n\\
\bV_{1,22} &= \ev_{\theta^*}[\bg_2(\bz;\theta^*)\bg_2(\bz;\theta^*)^{\T}] =  \var_{\theta^*}\Big(\vech\big((\bx-\mu_x^*)(\bx-\mu_x^*)^{\T}\big)\Big)\n\\
\bV_{1,23} &= \ev_{\theta^*}[\bg_2(\bz;\theta^*)\bg_3(\bz;\theta^*)^{\T}] = \ev_{\theta^*}[\vech\big((\bx-\mu_x^*)(\bx-\mu_x^*)^{\T}\big) (\bx - \mu_x^*)^{\T}] \n\\
\bV_{1,33} &= \ev_{\theta^*}[\bg_3(\bz;\theta^*)\bg_3(\bz;\theta^*)^{\T}] = \var_{\theta^*}(\bx).\n
\end{align}
Also, $\bV_{1,12} = 0$ and $\bV_{1,13} = 0$ because $\ev_{\theta^*}[\psi(y - \mu^* - \bx^{\T}\beta^*)|\bx] = 0$. Notice that $\bU_1$ is symmetric, and it has full rank because of our condition (ii), so we have $\sqrt{n}(\tilde\theta - \theta^*)\overset{d}\to N(\mathbf{0},\bU_1^{-1}\bV_1\bU_1^{-1})$. So we complete the proof of Theorem \ref{t2}.
\end{proof}

% Therefore
%$$\bV_1= \begin{pmatrix}
%  \ev_{\theta^*}[\psi^2(\epsilon)\bw\bw^{\T}] & 0 & 0 \\
%  0 &   \var_{\theta^*}\Big(\vech\big((\bx-\mu_x^*)(\bx-\mu_x^*)^{\T}\big)\Big) & \ev_{\theta^*}[\vech\big((\bx-\mu_x^*)(\bx-\mu_x^*)^{\T}\big) (\bx - \mu_x^*)^{\T}] \\
% 0 &  \ev_{\theta^*}[(\bx - \mu_x^*)\vech\big((\bx-\mu_x^*)(\bx-\mu_x^*)^{\T}\big)^{\T}] & \var_{\theta^*}(\bx)
% \end{pmatrix}.$$

The proof of Proposition \ref{p1} relies on the standard asymptotic result for Huber regression, which we restate here as Proposition \ref{c0}.
\begin{proposition}\label{c0}
Assume $y_i = \mu^* + \bx_i^{\T} \beta^*+ \epsilon_i$, $\{(\bx_i,\epsilon_i)\}_{i=1}^{n}, (\bx,\epsilon)$ are i.i.d. with $\ev_{\theta^*}[\psi(\epsilon)|\bx]=0$, and $\bx \sim (\mu_x^*,\Sigma_x^*)$ with positive definite $\Sigma_x^*$. If further assume that $\epsilon$ and $\bx$ are independent, then
\begin{equation*}
\sqrt{n} (\hat\beta_h - \beta^*) \overset{d}\to N(0,\frac{\ev_{\theta^*}[\psi^2(\epsilon)]}{(\ev_{\theta^*}[\psi'(\epsilon)])^2}\Sigma_x^{*-1}).
\end{equation*}
\end{proposition}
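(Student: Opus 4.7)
The plan is to derive Proposition \ref{c0} as a direct specialization of Theorem \ref{t2} rather than redoing a full M-estimation argument. The hypotheses here (i.i.d.\ sampling, $\ev_{\theta^*}[\psi(\epsilon)|\bx]=0$, positive definite $\Sigma_x^*$) are a strict subset of the hypotheses of Theorem \ref{t2}, so that theorem applies and gives the joint limit $\sqrt{n}(\tilde\theta - \theta^*) \overset{d}\to N(\mathbf{0}, \bU_1^{-1}\bV_1\bU_1^{-1})$. Because $\bU_1$ is block diagonal in Theorem \ref{t2} and $\bV_{1,12} = \bV_{1,13} = 0$, the upper-left block of the sandwich is simply $\bU_{1,11}^{-1}\bV_{1,11}\bU_{1,11}^{-1}$, which is the joint asymptotic covariance of $\sqrt{n}(\hat\mu_h-\mu^*,\hat\beta_h-\beta^*)^{\T}$. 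The desired marginal for $\hat\beta_h$ will be obtained once this $(1,1)$ block is computed and restricted to the $\beta$-coordinates.

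Next, I would invoke the additional assumption that $\epsilon$ is independent of $\bx$ to factor the expectations defining $\bU_{1,11}$ and $\bV_{1,11}$. Under independence,
\[
\bU_{1,11} = \ev_{\theta^*}[\psi'(\epsilon)\bw\bw^{\T}] = \ev_{\theta^*}[\psi'(\epsilon)]\,\ev_{\theta^*}[\bw\bw^{\T}],
\]
and analogously $\bV_{1,11} = \ev_{\theta^*}[\psi^2(\epsilon)]\,\ev_{\theta^*}[\bw\bw^{\T}]$. Plugging these into the sandwich collapses it to
\[
\bU_{1,11}^{-1}\bV_{1,11}\bU_{1,11}^{-1} = \frac{\ev_{\theta^*}[\psi^2(\epsilon)]}{(\ev_{\theta^*}[\psi'(\epsilon)])^2}\bigl(\ev_{\theta^*}[\bw\bw^{\T}]\bigr)^{-1}.
\]

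Finally, I would extract the $(\beta,\beta)$ subblock of $(\ev_{\theta^*}[\bw\bw^{\T}])^{-1}$ via a Schur-complement computation. Writing
\[
\ev_{\theta^*}[\bw\bw^{\T}] = \begin{pmatrix} 1 & \mu_x^{*\T} \\ \mu_x^* & \Sigma_x^* + \mu_x^*\mu_x^{*\T}\end{pmatrix},
\]
the Schur complement of the top-left scalar is $\Sigma_x^* + \mu_x^*\mu_x^{*\T} - \mu_x^*\mu_x^{*\T} = \Sigma_x^*$, so the lower-right block of the inverse equals $(\Sigma_x^*)^{-1}$. Restricting the limiting covariance to the $\beta$-coordinates therefore yields the claimed asymptotic variance $\frac{\ev_{\theta^*}[\psi^2(\epsilon)]}{(\ev_{\theta^*}[\psi'(\epsilon)])^2}(\Sigma_x^*)^{-1}$. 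There is essentially no technical obstacle here beyond noting that Theorem \ref{t2} applies; the proposition is a pure corollary obtained by combining the block structure of the sandwich with the factorization of expectations afforded by $\epsilon \indep \bx$.
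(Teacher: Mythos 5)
Your proposal is correct and follows essentially the same route as the paper's own proof: invoke Theorem~\ref{t2}, use independence of $\epsilon$ and $\bx$ to factor $\bU_{1,11}$ and $\bV_{1,11}$ so the sandwich collapses to $\frac{\ev_{\theta^*}[\psi^2(\epsilon)]}{(\ev_{\theta^*}[\psi'(\epsilon)])^2}\bigl(\ev_{\theta^*}[\bw\bw^{\T}]\bigr)^{-1}$, and then extract the $\beta$-block by a Schur complement. Your write-up is in fact slightly more explicit than the paper's (which leaves the factorization step implicit); the only quibble is your remark that the hypotheses being a ``strict subset'' of those of Theorem~\ref{t2} makes that theorem applicable --- strictly one should note that the omitted fourth-moment condition is only needed for the $S_x$ block and is irrelevant to the $(\mu,\beta)$ marginal used here.
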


\begin{proof}[Proof of Proposition \ref{c0}]
By Theorem \ref{t2} and the conditions of the corollary, we have 
\begin{align}
&\sqrt{n} \begin{pmatrix}
   \hat\mu_h - \mu^*\\
   \hat\beta_h - \beta^*
 \end{pmatrix} \overset{d}\to N\left(\mathbf{0},\frac{\ev_{\theta^*}[\psi^2(\theta)]}{(\ev_{\theta^*}[\psi'(\theta)])^2}\ev_{\theta^*}[\bw\bw^{\T}]^{-1}\right).\nonumber
\end{align}
After computing the corresponding Schur complement in $\ev_{\theta^*}[\bw\bw^{\T}]$, we have $$\sqrt{n}(\hat\beta_h - \beta^*)\overset{d}\to N\left(\mathbf{0},\frac{\ev_{\theta^*}[\psi^2(\theta)]}{(\ev_{\theta^*}[\psi'(\theta)])^2} (\ev_{\theta^*}[\bx\bx^{\T}] - \ev_{\theta^*}[\bx]\ev_{\theta^*}[\bx^{\T}])\right) = N\left(\mathbf{0},\frac{\ev_{\theta^*}[\psi^2(\theta)]}{(\ev_{\theta^*}[\psi'(\theta)])^2}\Sigma_x^{*-1}\right).$$ So we finish the proof.
\end{proof}

\begin{proof}[Proof of Proposition \ref{p1}]
By Proposition \ref{c0} we know that
\begin{equation*}
\sqrt{n}(\hat\eta - \eta) \to  N(0,\frac{\ev_{\theta^*}[\psi^2(\epsilon)]}{(\ev_{\theta^*}[\psi'(\epsilon)])^2} (\text{var}(\Gamma^{\T}\bx))^{-1}).
\end{equation*}
where $\text{var}(\Gamma^{\T}\bx) = \Gamma^{\T}\Sigma_x^*\Gamma = \Omega$ by the relation in \eqref{setup}. So $\sqrt{n}(\hat\beta_{eh}^*-\beta^*) = \sqrt{n}\Gamma(\hat\eta-\eta)\to N(0,\frac{\ev_{\theta^*}[\psi^2(\epsilon)]}{(\ev_{\theta^*}[\psi'(\epsilon)])^2}\Gamma\Omega^{-1}\Gamma^{\T}) $.
\end{proof}

To prove Theorem \ref{t4}, we need the following lemmas \ref{l1}-\ref{l4}.  Before introducing these lemmas, let us introduce some notations first. For notational simplicity, let $L_{n}(\theta)=G_{n}^{\T}(\theta) \widehat{\boldsymbol{\Delta}} G_{n}(\theta)$ and $L(\theta)=G^{\T}(\theta) \boldsymbol{\Delta} G(\theta)$, where $G_{n}(\theta)$ is given by \eqref{hn}, $\widehat{\boldsymbol{\Delta}}=\left\{n^{-1} \sum_{i=1}^{n} \bg\left(\mathbf{Z}_{i}; \tilde{\theta}\right) \bg^{\T}\left(\mathbf{Z}_{i} ; \tilde{\theta}\right)\right\}^{-1}$, $G(\theta)=\mathrm{E}_{\theta^*}[\bg(\mathbf{Z}; \theta)]$, and $\Delta=\bV_1^{-1}=\left\{\mathrm{E}_{\theta^*}\left[\bg\left(\mathbf{Z} ; \theta^*\right) \bg\left(\mathbf{Z} ; \theta^*\right)^{\T}\right]\right\}^{-1}$ . Let $ q_{n}(\boldsymbol{\gamma})=G_{n}\left(\boldsymbol{\gamma} / \sqrt{n}+\theta^*\right)$ and $q(\boldsymbol{\gamma})=G(\boldsymbol{\gamma} / \sqrt{n}+\theta^*)$. Thus,  $q_{n}(0)=G_{n}\left(\theta^*\right)$ and $q(0)=G\left(\theta^*\right)=0$. Let $W_{n}(\boldsymbol{\gamma})=q_{n}^{\T}(\boldsymbol{\gamma}) \widehat{\boldsymbol{\Delta}} q_{n}(\boldsymbol{\gamma})$ and $W(\gamma)=q^{\T}(\gamma) \Delta q(\gamma)$. In addition, let $\tau_{n}(\gamma)=\left[q_{n}(\gamma)-q_{n}(0)-q(\gamma)\right] /(1+\|\gamma\|)$, $\phi_{n}(\gamma)=\tau_{n}^{\T}(\gamma) \widehat{\Delta} \tau_{n}(\gamma)+2 q_{n}^{\T}(0) \widehat{\Delta} \tau_{n}(\gamma)$, and $\varsigma_{n}(\gamma)=n\left[W_{n}(\gamma)-\phi_{n}(\gamma)-W_{n}(0)-\widehat{\mathbf{D}}^{\T} \gamma / \sqrt{n}-W(\gamma)\right]$, where $\widehat{\mathbf{D}}=2 \bU_1 \widehat{\Delta} q_{n}(0)$. Here, $\bU_1$ and $\bV_1$ are the same as in the proof of Theorem \ref{t2}, i.e. $\bU_1 = \frac{\partial \ev_{\theta^*}[\bg(\bz;\theta)]}{\partial \theta^{\T}}|_{\theta=\theta^*} = \begin{pmatrix}
  \ev_{\theta^*}[\psi'(\epsilon)\bw\bw^{\T}] & 0 & 0 \\
  0 &  \mathrm{I}_{\frac{p(p+1)}{2}} & 0\\
  0 & 0 & \mathrm{I}_{p} 
 \end{pmatrix}$, and $\bV_1 = \begin{pmatrix}
  \ev_{\theta^*}[\psi^2(\epsilon)\bw\bw^{\T}] & 0 & 0 \\
  0 &   \bV_{1,22} & \bV_{1,23} \\
 0 &  \bV_{1,23}^{\T} & \bV_{1,33}
 \end{pmatrix}$, where \newline $\bV_{1,22} = \var_{\theta^*}\Big(\vech\big((\bx-\mu_x^*)(\bx-\mu_x^*)^{\T}\big)\Big)$,
$\bV_{1,23} = \ev_{\theta^*}[\vech\big((\bx-\mu_x^*)(\bx-\mu_x^*)^{\T}\big) (\bx - \mu_x^*)^{\T}]$ and $\bV_{1,33} = \var_{\theta^*}(\bx)$.
\begin{lemma}\label{l1}
Under condition (iii), for every sequence of positive numbers $\delta_n = o(1)$, we have
\begin{align}\label{a1}
\sup _{\theta :\left\|\theta-\theta^*\right\| \leq \delta_{n}}\left\|G_{n}(\theta)-G(\theta)-G_{n}\left(\theta^*\right)\right\|=o_{p}\left(n^{-1 / 2}\right)
\end{align}
\end{lemma}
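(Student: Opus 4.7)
My plan is to use the identity $G(\theta^*)=0$ to rewrite
\begin{equation*}
G_n(\theta)-G(\theta)-G_n(\theta^*) = (\mathbb{P}_n-\mathbb{P})[\bg(\cdot;\theta)-\bg(\cdot;\theta^*)],
\end{equation*}
and then decompose this into its three blocks $\bg_1,\bg_2,\bg_3$ and handle each separately. The third block $\bg_3(\bz;\theta)=\mu_x-\bx$ is affine in $\theta$, so $\bg_3(\bz;\theta)-\bg_3(\bz;\theta^*)=\mu_x-\mu_x^*$ is nonrandom and its centered empirical mean vanishes identically, contributing zero.

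For the second block, expanding the outer products in $\bg_2$ gives
\begin{equation*}
(\mathbb{P}_n-\mathbb{P})[\bg_2(\cdot;\theta)-\bg_2(\cdot;\theta^*)] = -\vech\!\left[(\bbx-\mu_x^*)(\mu_x-\mu_x^*)^{\T}+(\mu_x-\mu_x^*)(\bbx-\mu_x^*)^{\T}\right],
\end{equation*}
whose norm is at most $2\|\bbx-\mu_x^*\|\cdot\|\mu_x-\mu_x^*\|$. The central limit theorem gives $\|\bbx-\mu_x^*\|=O_p(n^{-1/2})$, and uniformly over $\|\theta-\theta^*\|\le\delta_n$ we have $\|\mu_x-\mu_x^*\|\le\delta_n=o(1)$, so this remainder is $o_p(n^{-1/2})$ as required.

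The first block is the delicate piece and calls for empirical-process machinery, since $\psi$ is non-differentiable. Each coordinate of $\bg_1(\bz;\theta)$ takes the form $\psi(y-\mu-\bx^{\T}\beta)\,w$ with $w$ equal to $1$ or to a coordinate of $\bx$. Because $\psi$ is $1$-Lipschitz and bounded by $k$, and $(\mu,\beta)\mapsto\mu+\bx^{\T}\beta$ is a finite-dimensional affine index, the class $\{\psi(y-\mu-\bx^{\T}\beta):(\mu,\beta)\in\Theta\}$ is a VC-subgraph class with index depending only on $p$; multiplying by the fixed coordinate $w$ preserves the Donsker property, and the envelope $2k(1+\|\bx\|)$ lies in $L_2(P)$ under the finite fourth-moment assumption on $\bx$. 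On the shrinking ball $\|\theta-\theta^*\|\le\delta_n$, the Lipschitz bound $|\psi(y-\mu-\bx^{\T}\beta)-\psi(y-\mu^*-\bx^{\T}\beta^*)|\le\delta_n(1+\|\bx\|)$ forces an $L_2(P)$-diameter of $O(\delta_n)\to 0$ for the restricted class. A standard asymptotic-equicontinuity result for Donsker classes on shrinking neighborhoods (e.g., Theorem 19.24 of \cite{V1998}) then yields
\begin{equation*}
\sqrt{n}\sup_{\|\theta-\theta^*\|\le\delta_n}\|(\mathbb{P}_n-\mathbb{P})[\bg_1(\cdot;\theta)-\bg_1(\cdot;\theta^*)]\|=o_p(1),
\end{equation*}
which is the required $o_p(n^{-1/2})$ rate.

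The principal obstacle is the non-differentiability of $\psi$, which rules out a naive Taylor-expansion combined with a uniform law of large numbers on the derivatives. The Lipschitz-plus-bounded property of $\psi$ is however enough to embed the score class in the VC/Donsker framework, so that once shrinkage of the envelope on the neighborhood is verified the equicontinuity conclusion is routine.
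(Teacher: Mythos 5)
Your proof is correct, and while the engine driving the hard part is the same as the paper's (stochastic equicontinuity of an empirical process with controlled entropy and shrinking $L_2(P)$-diameter), the route differs in instructive ways. The paper treats all three blocks of $\bg$ uniformly: it verifies the $L_2(P)$-continuity bounds $\ev_{\theta^*}[\sup_{\|\theta-\theta'\|\le\delta_n}\|\bg_j(\bz;\theta)-\bg_j(\bz;\theta')\|^2]\lesssim\delta_n$ for each block, concludes that $\bg$ falls in Andrews' ``type IV class,'' and invokes Lemma 2.17 of Pakes and Pollard (1989) once for the whole vector. You instead dispatch blocks two and three by exact algebra --- block three contributes identically zero and block two reduces to an explicit $O_p(n^{-1/2})\cdot o(1)$ term via $\|\bbx-\mu_x^*\|=O_p(n^{-1/2})$ --- which is sharper and more transparent than routing constant-plus-linear-in-$\bx$ terms through empirical-process machinery; you reserve the VC-subgraph/Donsker equicontinuity argument for the genuinely nonsmooth $\bg_1$ block. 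Two minor remarks. First, your Lipschitz bound $|\psi(a)-\psi(b)|\le|a-b|$ yields an $L_2$-diameter requiring fourth moments of $\bx$, whereas the paper's hybrid bound $|\psi(a)-\psi(b)|^2\le 2k|a-b|$ (Lipschitz plus boundedness) needs only $\ev_{\theta^*}[\|\bw\|^3]<\infty$ for that block; both are covered by the standing assumptions, so this is cosmetic. Second, the result you want from van der Vaart (1998) is Lemma 19.24 (stated for a sequence $\hat f_n$ in the class with $P(\hat f_n-f_0)^2\to 0$), and the supremum-over-a-shrinking-ball form you use follows from the asymptotic-equicontinuity characterization of Donsker classes in Section 19.2 rather than from that lemma verbatim; the conclusion is nonetheless standard and correct.
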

\begin{proof}[Proof of Lemma \ref{l1}]
For any $\theta$ and $\theta'$ such that $ \| \theta-\theta'\| \leq \delta_{n}$, we have
\begin{align}
&\left\|\bg_{1}\left(\bz ; \theta\right)-\bg_{1}\left(\bz; \theta'\right)\right\|^{2} \n\\
&\leq |\psi(y-\mu - \bx^{\T}\beta) - \psi(y -\mu' -  \bx^{\T}\beta')|^2\|\bw\|^2 \nonumber\\
&\le  2k |(\mu - \mu', \beta^{\T} - \beta^{'\T})\bw|\cdot \|\bw\|^2 \;\;\;\text{ (since $\psi(\cdot)$ is 1-Lipschitz and bounded by $k$)}\nonumber\\
&\le 2k\|\bw\|^3 \delta_n,\nonumber
\end{align}
therefore, we have
\begin{align}\label{lemma1_1}
&\ev_{\theta^*}\left[\sup_{\theta : \| \theta-\theta'\|  \leq \delta_{n}}\left\|\bg_{1}\left(\bz ; \theta\right)-\bg_{1}\left(\bz ; \theta'\right)\right\|^{2}\right]\le 2k\ev_{\theta^*}[\|\bw\|^3]\delta_n \triangleq c_1 \delta_n,
\end{align}
here $c_1$ is a positive constant.

Next, for any $\theta$ and $\theta'$ such that $ \| \theta-\theta'\| \leq \delta_{n}$, we have 
\begin{align}
&\left\|\bg_{2}\left(\bz; \theta\right)-\bg_{2}\left(\bz ; \theta'\right)\right\|^{2} \n\\
&= \Big\|\vech(\Sigma_x) - \vech\big((\bx - \mu_x)(\bx - \mu_x)^{\T}\big) -\vech(\Sigma_x') + \vech\big((\bx - \mu_x')(\bx - \mu_x')^{\T}\big)\Big\|^2 \n\\
&\le 2\delta_n^2 + 2\Big\|\vech\big((\bx - \mu_x)(\bx - \mu_x)^{\T}\big) - \vech\big((\bx - \mu_x')(\bx - \mu_x')^{\T}\big)\Big\|^2\n\\
&\le 2\delta_n^2 + 4\Big\|\vech\big((\bx - \mu_x')(\mu_x' - \mu_x)^{\T} +(\mu_x' - \mu_x)(\bx - \mu_x')^{\T} \big)\Big\|^2 + 4\Big\|\vech\big((\mu_x' - \mu_x)(\mu_x' - \mu_x)^{\T}\big)\Big\|^2\n\\
&\le 2\delta_n^2 + 16\|\bx-\mu_x'\|^2 \|\mu_x' - \mu_x\|^2 + 4\|\mu_x' - \mu_x\|^4\n\\
&\le 2\delta_n^2 + 4\delta_n^4 + 16\delta_n^2 \|\bx-\mu_x'\|^2\n\\
&\le 2\delta_n^2 + 4\delta_n^4 + 32\delta_n^2 \|\bx\|^2 + 32\delta_n^2\|\mu_x'\|^2.\n
\end{align}
Therefore, by the fact that $\ev_{\theta^*}[\|\bx\|^4]<\infty$, $\Theta$ is compact and $\delta_n = o(1)$, there exists some fixed positive constant $c_2$ such that
\begin{align}\label{lemma1_2}
\ev_{\theta^*}\left[\underset{\theta : \| \theta-\theta'\| \leq \delta_{n}}{\sup }\left\|\bg_{2}\left(\bz; \theta\right)-\bg_{2}\left(\bz ; \theta'\right)\right\|^{2}\right] \leq  c_2\delta_{n}^{2}.
\end{align}

For $\bg_3(\bz;\theta)$, we have $\left\|\bg_{3}\left(\bz; \theta\right)-\bg_{3}\left(\bz ; \theta'\right)\right\|^{2} = \|\mu_x - \mu_x'\|^2$, so
\begin{align}\label{lemma1_3}
\ev_{\theta^*}\left[\underset{\theta : \| \theta-\theta'\| \leq \delta_{n}}{\sup }\left\|\bg_{3}\left(\bz; \theta\right)-\bg_{3}\left(\bz ; \theta'\right)\right\|^{2}\right] \leq  \delta_{n}^{2}.
\end{align}
The results in \eqref{lemma1_1}, \eqref{lemma1_2} and \eqref{lemma1_3} together imply that $\bg(\bz; \theta)$ belongs to the ``type IV class'' of \cite{andrew1994} and is $L_2(P)$-continuous at $\theta$, for all $\theta\in\Theta$. For details regarding this statement, see (5.3) in \cite{andrew1994}. Thus, by applying Lemma 2.17 in \cite{pakes1989}, we have
\begin{equation*}
n^{-1 / 2} \sup _{\theta :\left\|\theta-\theta^*\right\| \leq \delta_{n}} \| \sum_{i=1}^{n}\left\{\bg(\bz_i ; \theta)-\mathrm{E}_{\theta^*}[\bg(\bz ; \theta)]-\bg\left(\bz_i ; \theta^*\right)\right\}| |=o_{p}(1).
\end{equation*}
Thus the lemma is proved.
\end{proof}

\begin{lemma}\label{l2}
Under the same condition as in Theorem \ref{t4}, $\widehat{\theta}_{g} \stackrel{p}{\longrightarrow} \theta^*$.
\end{lemma}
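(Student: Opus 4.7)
The plan is to prove consistency of $\widehat{\theta}_g$ via the three standard steps for a GMM/extremum estimator: (a) uniform convergence of the sample criterion $L_n(\theta)$ to the population criterion $L(\theta)$ on the envelope-constrained feasible set; (b) identification, namely that $\theta^*$ is the unique minimizer of $L(\theta)$ on that set; and (c) invoking a standard argmin consistency result (e.g.\ Theorem 2.1 of Newey and McFadden, 1994).

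First I would establish that $\widehat{\boldsymbol{\Delta}} \stackrel{p}{\to} \boldsymbol{\Delta}$. This uses consistency of $\tilde\theta$, which itself is noted in the proof of Theorem \ref{t2} and can be argued exactly as in this lemma but without the envelope constraint, combined with the $L_2$-continuity of $\bg(\bz; \cdot)$ at $\theta^*$ implicit in the bounds \eqref{lemma1_1}--\eqref{lemma1_3} from Lemma \ref{l1} and the finite fourth moment from condition (ii). Next I would establish the uniform convergence $\sup_{\theta \in \Theta} \|G_n(\theta) - G(\theta)\| = o_p(1)$ via a uniform law of large numbers: compactness of $\Theta$ from condition (i), the $L_2$-continuity of $\bg(\bz;\theta)$ at every $\theta \in \Theta$ (obtained by repeating the arguments of Lemma \ref{l1} centered at arbitrary $\theta$ rather than $\theta^*$), and the finite fourth moment of $\bx$ together meet the hypotheses of a standard ULLN. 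Because $G$ is continuous on the compact set $\Theta$ and hence bounded, chaining the ULLN with $\widehat{\boldsymbol{\Delta}} \stackrel{p}{\to} \boldsymbol{\Delta}$ yields $\sup_{\theta \in \Theta} |L_n(\theta) - L(\theta)| = o_p(1)$.

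The key non-routine step is identification. Since $\boldsymbol{\Delta}$ is positive definite by condition (ii), $L(\theta) \ge 0$ with equality iff $G(\theta) = 0$. Solving $G(\theta)=0$ block by block: the $\bg_3$ component forces $\mu_x = \mu_x^*$; then $\bg_2$ forces $\Sigma_x = \Sigma_x^*$; and finally, given the positive definiteness of $\Sigma_x^*$ and the continuity of the distribution of $\epsilon$ from condition (iii), the population Huber criterion $\ev_{\theta^*}[\rho(y - \mu - \bx^{\T}\beta)]$ is strictly convex in $(\mu,\beta)$, so its gradient condition $\ev_{\theta^*}[\psi(y - \mu - \bx^{\T}\beta)(1,\bx^{\T})^{\T}] = 0$ admits the unique root $(\mu^*,\beta^*)$. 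Condition (iii) guarantees that $\theta^*$ itself is of the form $env(\zeta^*)$ and hence lies in the feasible set $\mathcal{F} = \{env(\zeta): \zeta \in \text{supp}(\zeta)\}$, so $\theta^*$ is the unique minimizer of $L$ over $\mathcal{F}$.

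Finally, $\mathcal{F}$ is compact as the continuous image of the compact support of $\zeta$ (condition (i)), so combining uniform convergence with unique identification on a compact set yields $\widehat{\theta}_g \stackrel{p}{\to} \theta^*$ by the standard extremum-estimator consistency argument. The main obstacle I anticipate is the identification step, in particular justifying the strict convexity of the population Huber criterion and the resulting uniqueness of the root of $\bg_1=0$; the remaining ingredients are largely mechanical extensions of Lemma \ref{l1}.
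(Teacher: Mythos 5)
Your overall architecture is the same as the paper's: uniform convergence of $L_n$ to $L$ over the compact feasible set $\Theta_e=\{env(\zeta)\}$, unique identification of $\theta^*$, and then Theorem 2.1 of Newey and McFadden. The two places where you diverge are instructive. First, for the uniform law of large numbers the paper goes through the Pakes--Pollard empirical-process machinery (each component of $\{\bg(\bz;\theta):\theta\in\Theta\}$ is Euclidean with an integrable envelope, via their Lemmas 2.13 and 2.8), whereas you invoke a generic ULLN from compactness plus $L_2$-continuity plus moments; both are standard and either works here. Second, and more substantively, the paper simply \emph{asserts} that $\theta^*$ is the unique root of $G$, while you actually try to prove identification by solving $G(\theta)=0$ block by block and appealing to convexity of the population Huber criterion --- this is a genuine improvement in completeness, but your argument as stated has two soft spots. (a) The criterion $\ev_{\theta^*}[\rho(y-\mu-\bx^{\T}\beta)]$ need not be finite: $\rho$ grows linearly and the model allows $\epsilon$ with no first moment (Cauchy errors appear explicitly in the simulations), so you must work with the shifted criterion $\ev_{\theta^*}[\rho(y-\mu-\bx^{\T}\beta)-\rho(y-\mu^*-\bx^{\T}\beta^*)]$, which is finite by the Lipschitz property of $\rho$. (b) The criterion is convex but not strictly convex in general --- its Hessian $\ev_{\theta^*}[\mathbf{1}\{|y-\mu-\bx^{\T}\beta|\le k\}\bw\bw^{\T}]$ can degenerate away from the truth, and continuity of the law of $\epsilon$ alone does not rescue this. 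The correct ingredient is condition (ii), which forces $\ev_{\theta^*}[\psi'(\epsilon)\bw\bw^{\T}]\succ 0$, i.e.\ a positive-definite Hessian at the critical point $\theta^*$; combined with global convexity (the minimizer set of a convex function is convex, so a strict local minimizer is the unique global one) this yields uniqueness of the root of $\bg_1=0$. With these two repairs your identification step is sound, and arguably more informative than the paper's bare assertion.
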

\begin{proof}[Proof of Lemma \ref{l2}]
Let $\mathcal{F}=\{\bg(\bz; \theta), \theta \in \Theta\}$, where $\Theta$  is compact under condition (i). By an argument similar to the proof of Lemma \ref{l1} and our condition (i), it is easy to verify that the condition of Lemma 2.13 in \cite{pakes1989} holds for each component of $\mathcal{F}$, and therefore each component of $\mathcal{F}$ (as a class of functions of $\bz$) is ``Euclidean'' for an envelope function. By our condition (iii) and expression of the envelope function given in the Lemma 2.13, it can be easily verified that all the envelope functions in our problem are integrable with respect to $\mathrm{P}_{\theta^*}$. Then by Lemma 2.8 of \cite{pakes1989}, we know the Uniform Law of Large Numbers holds for each component of $\mathcal{F}$. This implies
\begin{equation*}
\sup _{\theta \in \Theta}\left\|G_{n}(\theta)-G(\theta)\right\| \rightarrow 0, \text { a.s. }
\end{equation*}
As a result, $L_{n}(\theta)$ converges uniformly to $L(\theta)$ over $\Theta$ in probability. This implies that $L_{n}(\theta)$ converges uniformly to $L(\theta)$ in probability over a subset $\Theta_{e}$, where $\Theta_{e}=\{\theta : \theta \in \Theta \text{ and\;} \theta= env\left(\zeta\right) \text{ for some\;} \zeta\} $. Since the support of $\zeta$ is compact and $env(\cdot)$ is a continuous map, $\Theta_{e}$  is compact. Moreover, since $G\left(\theta^*\right)=0$  and $\theta^*$  is the unique root, $\theta^*$ is the unique minimizer of $L(\theta) $. By our condition (ii), $L(\theta)$ is continuous. Therefore, all the conditions of Theorem 2.1 in \cite{newey1994} hold, by the result of the Theorem 2.1 we have $\widehat{\theta}_{g} \stackrel{p}{\longrightarrow} \theta^*$.
\end{proof}

\begin{lemma}\label{l3}
Under the same conditions in Theorem \ref{t4}, for every sequence of positive numbers $\delta_n = o(1)$,
\begin{align}
\sup _{\vartheta} \frac{\left|\varsigma_{n}(\gamma)\right|}{\|\gamma\|(1+\|\gamma\|)}=o_{p}(1)\nonumber
\end{align}
where $\vartheta=\left\{\gamma :\|\gamma\| / \sqrt{n} \leq \delta_{n}\right\}$.
\end{lemma}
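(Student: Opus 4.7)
\textbf{Proof proposal for Lemma \ref{l3}.}

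The plan is to expand the quadratic form $W_n(\gamma) = q_n^{\T}(\gamma)\widehat{\boldsymbol\Delta} q_n(\gamma)$ using the decomposition $q_n(\gamma) = q_n(0) + q(\gamma) + R_n(\gamma)$, where $R_n(\gamma) \coloneqq (1+\|\gamma\|)\tau_n(\gamma)$, and then subtract off $W_n(0)$, $W(\gamma)$, $\widehat{\mathbf{D}}^{\T}\gamma/\sqrt n$, and $\phi_n(\gamma)$. Using $\widehat{\mathbf{D}}^{\T}\gamma/\sqrt n = 2 q_n^{\T}(0)\widehat{\boldsymbol\Delta} \bU_1\gamma/\sqrt n$ (which holds since $\bU_1$ is symmetric), I expect to arrive at the identity
\begin{align*}
&W_n(\gamma) - W_n(0) - W(\gamma) - \widehat{\mathbf{D}}^{\T}\gamma/\sqrt n - \phi_n(\gamma) \\
&= q^{\T}(\gamma)\bigl(\widehat{\boldsymbol\Delta} - \boldsymbol\Delta\bigr) q(\gamma) + 2 q_n^{\T}(0) \widehat{\boldsymbol\Delta}\bigl[q(\gamma) - \bU_1\gamma/\sqrt n\bigr] \\
&\quad + \bigl[R_n^{\T} \widehat{\boldsymbol\Delta} R_n - \tau_n^{\T} \widehat{\boldsymbol\Delta} \tau_n\bigr] + 2\bigl[q_n^{\T}(0) \widehat{\boldsymbol\Delta} R_n - q_n^{\T}(0) \widehat{\boldsymbol\Delta} \tau_n\bigr] + 2 q^{\T}(\gamma) \widehat{\boldsymbol\Delta} R_n.
\end{align*}
The lemma thus reduces to showing that $n$ times each of these five terms, divided by $\|\gamma\|(1+\|\gamma\|)$, is $o_p(1)$ uniformly on $\vartheta$.

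The stochastic ingredients I will collect from the earlier results and condition (ii) are: the central limit theorem yields $\|q_n(0)\| = \|G_n(\theta^*)\| = O_p(n^{-1/2})$; differentiability of $G$ at $\theta^*$ with derivative $\bU_1$ gives $\|q(\gamma) - \bU_1\gamma/\sqrt n\| \le \varepsilon(\delta_n)\|\gamma\|/\sqrt n$ for some $\varepsilon(\delta_n) \to 0$, with in particular $\|q(\gamma)\| = O(\|\gamma\|/\sqrt n)$; Lemma \ref{l1} gives $\sup_\vartheta \|R_n(\gamma)\| = o_p(n^{-1/2})$ and hence $\sup_\vartheta (1+\|\gamma\|)\|\tau_n(\gamma)\| = o_p(n^{-1/2})$; and consistency of $\tilde\theta$ (established in the proof of Theorem \ref{t2}) combined with a uniform law of large numbers on $\{\bg(\bz;\theta)\bg^{\T}(\bz;\theta): \theta \in \Theta\}$ of the sort used in Lemma \ref{l2}, followed by the continuous mapping theorem, gives $\|\widehat{\boldsymbol\Delta} - \boldsymbol\Delta\| = o_p(1)$.

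With these bounds in hand, each term is handled by Cauchy--Schwarz. For example, the third term above equals $n \|\gamma\|(2+\|\gamma\|)\, \tau_n^{\T} \widehat{\boldsymbol\Delta} \tau_n$, and since $n\|\tau_n\|^2 = o_p(1)/(1+\|\gamma\|)^2$ uniformly, dividing by $\|\gamma\|(1+\|\gamma\|)$ bounds this term by $(2+\|\gamma\|)/(1+\|\gamma\|)^3 \cdot o_p(1) \le 2\, o_p(1)$ uniformly on $\vartheta$. The first term is $\|\gamma\|^2 o_p(1)$ after the rescaling by $n$, so dividing by $\|\gamma\|(1+\|\gamma\|)$ leaves $o_p(1)$; the second is $\|\gamma\|\varepsilon(\delta_n)\cdot O_p(1)$ which gives $o_p(1)$; and the fourth and fifth both carry a compensating factor of $\|\gamma\|$ or $(1+\|\gamma\|)$ from the $R_n - \tau_n$ difference or from $R_n$ itself, which cancels against the denominator after using $\|\gamma\|/(1+\|\gamma\|) \le 1$.

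The work is essentially careful bookkeeping rather than a deep analytic step: the main obstacle is tracking the $(1+\|\gamma\|)$ factors through each of the five terms so that the rescaling by $\|\gamma\|(1+\|\gamma\|)/n$ yields uniform $o_p(1)$, especially near the boundaries $\|\gamma\| \to 0$ and $\|\gamma\|/\sqrt n \to \delta_n$. Indeed, the somewhat peculiar normalization in the definition of $\phi_n(\gamma)$ is designed precisely so that the residual $R_n$-dependent pieces surviving in $\varsigma_n(\gamma)$ after subtraction of $\phi_n$ each carry a compensating factor of $\|\gamma\|$, which is what ultimately makes the bound uniform.
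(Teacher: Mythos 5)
Your proposal is correct and follows essentially the same route as the paper: the identity you derive for $W_n(\gamma)-W_n(0)-W(\gamma)-\widehat{\mathbf{D}}^{\T}\gamma/\sqrt{n}-\phi_n(\gamma)$ produces exactly the paper's five remainder terms $R_1,\dots,R_5$ (in a different order), and you control them with the same ingredients — Lemma \ref{l1} for $\tau_n$, the CLT for $q_n(0)$, differentiability of $G$ at $\theta^*$ for $q(\gamma)-\bU_1\gamma/\sqrt{n}$, and consistency of $\widehat{\boldsymbol\Delta}$. The bookkeeping of the $(1+\|\gamma\|)$ factors checks out, so no changes are needed.
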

\begin{proof}[Proof of lemma \ref{l3}]
Based on the definition of $\tau_n(\gamma)$, we have
\begin{align}
W_{n}(\gamma)&=(1+ \|\gamma\|)^{2} \tau_{n}^{\T}(\gamma) \widehat{\Delta} \tau_{n}(\gamma)+q_{n}^{\T}(0) \widehat{\Delta} q_{n}(0)+q^{\T}(\gamma) \widehat{\Delta} q(\gamma)+2(1+\|\gamma\|) \tau_{n}^{\T}(\gamma) \widehat{\Delta} q_{n}(0)\nonumber\\
&+2(1+\|\gamma\|) \tau_{n}^{\T}(\gamma) \widehat{\Delta} q(\gamma)+2 q_{n}^{\T}(0) \widehat{\Delta} q(\gamma) \nonumber
\end{align}
and $W_{n}(0)=q_{n}^{\T}(0) \widehat{\Delta} q_{n}(0)$. Consequently, it can be shown that $\left|\varsigma_{n}(\gamma)\right| /[\|\gamma\|(1+\|\gamma\|)]\le\sum_{j=1}^{5} R_{j}(\gamma)$, where 
\begin{align}
&R_{1}(\gamma)=n(\|\gamma\|+2) \tau_{n}^{\T}(\gamma) \widehat{\Delta} \tau_{n}(\gamma) /(1+\| \gamma\|), \quad R_{2}(\gamma)=2 n\left|\tau_{n}^{\T}(\gamma) \widehat{\Delta} q_{n}(0)\right| /(1+\|\gamma\|), \nonumber\\
&R_{3}(\gamma)=2 n\left|\tau_{n}^{\T}(\gamma) \widehat{\Delta} q(\gamma)\right| /\|\gamma\|, R_{4}(\gamma)=n\left|2 q_{n}^{\T}(0) \widehat{\Delta} q(\gamma)-\widehat{\mathbf{D}}^{\T} \gamma / \sqrt{n}\right| /[\|\gamma\|(1+\|\gamma\|)],\nonumber\\
&R_{5}(\gamma)=n\left|q^{\T}(\gamma)(\widehat{\Delta}-\Delta) q(\gamma)\right| /[\|\gamma\|(1+\|\gamma\|)].\nonumber
\end{align}
To prove Lemma \ref{l3}, it suffices to show that $\sup_{\vartheta} R_{j}(\gamma)=o_{p}(1)$ for all $j=1,\dots,5$. From Lemma \ref{l1}, we know that $\sup _{\vartheta}\left\|\tau_{n}(\gamma)\right\|=o_{p}\left(n^{-1 / 2}\right)$. Then under condition (ii), $$\sup _{\vartheta} R_{1}(\gamma) \leq \sup _{\vartheta} n\left\|\tau_{n}(\gamma)\right\|^{2}\|\widehat{\Delta}\|(\|\gamma\|+2) /(1+\|\gamma\|)=n \sup _{\vartheta}\left\|\tau_{n}(\gamma)\right\|^{2} O_{p}(1)=o_{p}(1),$$ and $\sup _{\vartheta} R_{2}(\gamma) \leq \sqrt{n} \sup _{\vartheta}\left\|\tau_{n}(\gamma)\right\| O_{p}(1)=o_{p}(1)$. By Taylor expansion, $q(\gamma)=\bU_1 \gamma / \sqrt{n}+o(\gamma / \sqrt{n})$. So we have $$\sup _{\vartheta} R_{3}(\gamma) \leq 2 \sqrt{n} \sup _{\vartheta}\left\|\tau_{n}(\gamma)\right\|\|\widehat{\Delta}\|\left(\left\|\bU_1\right\|\|\gamma\|+o(\|\gamma\|)\right)/\|\gamma\|  \le \sqrt{n} \sup _{\vartheta}\left\|\tau_{n}(\gamma)\right\| O_{p}(1)=o_{p}(1),$$ and $\sup _{\vartheta} R_{4}(\gamma)=2 n \sup _{\vartheta} \left| q_{n}^{\T}(0) \widehat{\Delta}[q(\gamma) - \frac{\bU_1 \gamma}{\sqrt{n}}]\right| /[\|\gamma\|(1+\|\gamma\| )]  \leq \sqrt{n}\left\|q_{n}(0)\right\|\|\widehat{\Delta}\| o_{p}(1)=o_{p}(1)$. Finally, since $\sqrt{n}\|q(\boldsymbol{\gamma})\| \leq\|\gamma\|[\| \bU_1 \|+o(1)]$, $\sup _{\vartheta} R_{5}(\gamma) \leq \sup _{\vartheta} n\|q(\gamma)\|^{2}\|\widehat{\Delta}-\Delta\| /[\|\gamma\|(1+\|\gamma\|)] \leq \sup _{\vartheta}\|\widehat{\Delta}-\Delta\| O_{p}(1)=o_{p}(1)$.
\end{proof}
Note that $W_{n}(\gamma)$ is minimized at $\widehat{\gamma}_{g}=\sqrt{n}\left(\widehat{\theta}_{g}-\theta^*\right)$ under enveloping. 
\begin{lemma}\label{l4}
Under the same conditions in Theorem \ref{t4}, 
\begin{align}
\left\|\widehat{\gamma}_{g}\right\|=\sqrt{n}\|\widehat{\theta}_{g}-\theta^*\|=O_{p}(1).\nonumber
\end{align}
\end{lemma}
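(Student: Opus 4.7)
The plan is to exploit the envelope minimization property of $\widehat{\theta}_g$ together with the quadratic expansion encoded in Lemma \ref{l3}. Since $\theta^* = env(\zeta^*)$ is itself a feasible point for the constrained GMM problem, the definition of $\widehat{\theta}_g$ gives $L_n(\widehat{\theta}_g) \leq L_n(\theta^*)$, i.e.\ $W_n(\widehat{\gamma}_g) \leq W_n(0)$. Rewriting this inequality through the identity $nW_n(\gamma) = nW_n(0) + n\phi_n(\gamma) + \sqrt{n}\,\widehat{\mathbf{D}}^{\T}\gamma + nW(\gamma) + \varsigma_n(\gamma)$ that defines $\varsigma_n$, it becomes
\[
nW(\widehat{\gamma}_g) \;\leq\; -\,n\phi_n(\widehat{\gamma}_g) \;-\; \sqrt{n}\,\widehat{\mathbf{D}}^{\T}\widehat{\gamma}_g \;-\; \varsigma_n(\widehat{\gamma}_g).
\]
The strategy is then to show that the left-hand side is bounded below by a positive multiple of $\|\widehat{\gamma}_g\|^2$, and that the right-hand side is of order $o_p(1) + O_p(1)\|\widehat{\gamma}_g\| + o_p(1)\|\widehat{\gamma}_g\|^2$, and to conclude $\|\widehat{\gamma}_g\| = O_p(1)$ by an elementary quadratic-inequality argument.

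First I would use Lemma \ref{l2} to localize: $\widehat{\theta}_g \overset{P}\to \theta^*$ implies $\|\widehat{\gamma}_g\|/\sqrt{n} \overset{P}\to 0$, so one can pick $\delta_n = o(1)$ so that $\widehat{\gamma}_g \in \vartheta = \{\gamma : \|\gamma\|/\sqrt{n} \leq \delta_n\}$ with probability tending to one; this legitimizes applying Lemmas \ref{l1} and \ref{l3} at the random point $\widehat{\gamma}_g$. For the lower bound, condition (ii) supplies $q(\gamma) = \bU_1\gamma/\sqrt{n} + o(\|\gamma\|/\sqrt{n})$ by differentiability of $G$ at $\theta^*$, so
\[
nW(\gamma) \;=\; \gamma^{\T}\bU_1^{\T}\boldsymbol{\Delta}\,\bU_1\gamma \;+\; o(\|\gamma\|^2).
\]
Since $\bU_1$ is a full-rank square matrix and $\boldsymbol{\Delta} = \bV_1^{-1}$ is positive definite (condition (ii)), $\bU_1^{\T}\boldsymbol{\Delta}\,\bU_1$ is positive definite with some smallest eigenvalue $\lambda_{\min} > 0$; thus on the event $\widehat{\gamma}_g \in \vartheta$ the Taylor remainder is absorbed and $nW(\widehat{\gamma}_g) \geq (\lambda_{\min}/2)\|\widehat{\gamma}_g\|^2$ eventually.

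For the three right-hand-side terms: by Lemma \ref{l1}, $\sqrt{n}\|\tau_n(\gamma)\| = o_p(1)$ uniformly on $\vartheta$, so $n\tau_n^{\T}\widehat{\boldsymbol{\Delta}}\tau_n = o_p(1)$; and $\sqrt{n}\,q_n(0) = O_p(1)$ by the CLT, so the cross piece in $n\phi_n$ is $O_p(1)\cdot o_p(1) = o_p(1)$, giving $n\phi_n(\widehat{\gamma}_g) = o_p(1)$. Next, $\widehat{\mathbf{D}} = 2\bU_1\widehat{\boldsymbol{\Delta}}\,q_n(0)$ makes $\sqrt{n}\,\widehat{\mathbf{D}} = O_p(1)$, so the linear term is bounded by $O_p(1)\|\widehat{\gamma}_g\|$. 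Finally, Lemma \ref{l3} gives $|\varsigma_n(\widehat{\gamma}_g)| = o_p(1)\|\widehat{\gamma}_g\|(1 + \|\widehat{\gamma}_g\|)$. Writing $r = \|\widehat{\gamma}_g\|$, these bounds collapse the inequality to
\[
\tfrac{\lambda_{\min}}{2}\,r^2 \;\leq\; o_p(1) \;+\; O_p(1)\,r \;+\; o_p(1)\,(r + r^2),
\]
so $(\tfrac{\lambda_{\min}}{2} - o_p(1))\,r^2 \leq O_p(1)\,r + o_p(1)$, and the quadratic formula forces $r = O_p(1)$.

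The main obstacle is not analytic but rather bookkeeping: carefully assembling the uniform bounds from Lemmas \ref{l1} and \ref{l3}, keeping track of which error terms carry factors of $\|\widehat{\gamma}_g\|$ versus $\|\widehat{\gamma}_g\|^2$, and verifying that the localization supplied by Lemma \ref{l2} is strong enough to make the Taylor remainder in $nW$ strictly dominated so that the positive-definite quadratic form $\bU_1^{\T}\boldsymbol{\Delta}\,\bU_1$ controls the left-hand side. Once these pieces are in place, the $\sqrt{n}$-rate follows from a single scalar inequality.
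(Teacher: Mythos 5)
Your proposal is correct and follows essentially the same route as the paper's proof: both exploit $W_n(\widehat{\gamma}_g)\leq W_n(0)$, the decomposition defining $\varsigma_n$, the bound $n\phi_n(\widehat{\gamma}_g)=o_p(1)$ from Lemma \ref{l1}, the quadratic lower bound $nW(\gamma)\geq c\|\gamma\|^2$ from positive definiteness of $\bU_1\bV_1^{-1}\bU_1$, the uniform control of $\varsigma_n$ from Lemma \ref{l3}, and a final scalar quadratic inequality. The only difference is cosmetic (the paper packages the rearranged inequality as a quantity $\xi_n$ with $\liminf\xi_n\geq 0$, while you rearrange directly), so no further comparison is needed.
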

\begin{proof}[Proof of Lemma \ref{l4}]
Let $\vartheta$ be the same defined as in Lemma \ref{l3}. First, by the proof of Lemma \ref{l3},
\begin{align}
\sup _{\vartheta}\left|\phi_{n}(\gamma)\right| \leq O_{p}(1) \sup _{\vartheta}\left(\left\|\tau_{n}(\gamma)\right\|^{2}+2\left\|\tau_{n}(\gamma)\right\|\left\| q_{n}(0)\right\|\right) \leq o_{p}\left(n^{-1}\right) \nonumber
\end{align}
Note that under the envelope setting, $W_{n}\left(\widehat{\gamma}_{g}\right) \leq W_{n}(0)$, and by Lemma \ref{l2}, $\widehat{\gamma}_{g} \in \vartheta$ with probability going to 1, for some $\{\delta_n\}$ converging to zero. Hence $W_{n}\left(\widehat{\gamma}_{g}\right)-\phi_{n}(\gamma) \leq W_{n}\left(\widehat{\gamma}_{g}\right)+o_{p}\left(n^{-1}\right) \leq W_{n}(0)+o_{p}\left(n^{-1}\right)$. Therefore, define $\xi_n$ as
\begin{align}
\xi_n=-n\left[W_{n}\left(\widehat{\gamma}_{g}\right)-\phi_{n}(\gamma)-W_{n}(0)-o_{p}\left(n^{-1}\right)\right]=-\varsigma_{n}\left(\widehat{\gamma}_{g}\right)-\sqrt{n} \widehat{\mathbf{D}}^{\T} \widehat{\gamma}_{g}-n W\left(\widehat{\gamma}_{g}\right)+o_{p}(1), \nonumber
\end{align}
then with probability going to 1, we have $\liminf_{n\to\infty} \xi_n \ge 0$.

By Taylor expansion, $W\left(\widehat{\gamma}_{g}\right)=\widehat{\gamma}_{g}^{\T} \mathbf{H} \widehat{\gamma}_{g} /(2 n)+o\left(\left\|\widehat{\gamma}_{g}\right\|^{2} / n\right)$, where $$\mathbf{H}=n\left.\frac{\partial^{2} W(\boldsymbol{\gamma})}{\partial \boldsymbol{\gamma} \boldsymbol{\gamma}^{\T}}\right|_{\boldsymbol{\gamma}=0}=\left.\frac{\partial^{2} L(\theta)}{\partial \theta \theta^{\T}}\right|_{\theta=\theta^*}=2 \bU_1 \Delta \bU_1=2 \bU_1 \bV_1^{-1} \bU_1.$$ Since $\bH$ is positive definite by condition (ii), there exists a constant $c > 0$, such that with probability approaching one, $W\left(\widehat{\gamma}_{g}\right) \geq c\left\|\widehat{\gamma}_{g}\right\|^{2} / n$. Therefore, by applying Lemma \ref{l3}, we have $\xi_n \leq\left\|\widehat{\gamma}_{g}\right\|\left(1+\| \widehat{\gamma}_{g}\|\right) o_{p}(1)+\left\|\widehat{\gamma}_{g}\right\| O_{p}(1)-c\left\|\widehat{\gamma}_{g}\right\|^{2}+o_{p}(1)=\left[-c+o_{p}(1)\right]\|\widehat{\gamma}_{g}\|^{2}+\| \widehat{\gamma}_{g}\| O_{p}(1)+o_{p}(1)$.

As $\liminf_{n\to\infty}\xi_n \geq 0$ and $-c+o_{p}(1)<0$ with probability approaching one, it follows that $\left\|\widehat{\gamma}_{g}\right\|^{2}-2\left\|\widehat{\gamma}_{g}\right\| O_{p}(1) \leq o_{p}(1)$ for large enough $n$, with probability approaching 1. Hence for large enough $n$, $\left[\|\widehat{\gamma}_{g}\|-O_{p}(1)\right]^{2} \leq O_{p}(1)$ and $\big|\|\widehat{\gamma}_{g}\|-O_{p}(1) \big| \leq O_{p}(1)$, with probability approaching 1. This concludes $\|\widehat{\gamma}_{g}\|=O_{p}(1)$.
\end{proof}

\begin{proof}[Proof of Theorem \ref{t4}]
Let $Q_{n}(\gamma)=n\left[W_{n}(\gamma)-W_{n}(0)\right]$. Under the envelope setting, $Q_n(\gamma)$ is minimized at $\widehat{\gamma}_{g}$. Based on the results in Lemmas \ref{l3} and \ref{l4}, and Taylor expansion, we see that
\begin{align}
\begin{aligned} Q_{n}(\gamma) &=\sqrt{n} \widehat{\mathbf{D}}^{\T} \gamma+n W(\gamma)+o_{p}(1)=\sqrt{n} \widehat{\mathbf{D}}^{\T} \gamma+\frac{1}{2} \boldsymbol{\gamma}^{\T} \mathbf{H} \boldsymbol{\gamma}+o_{p}(1) \nonumber\\
& \stackrel{d}{\rightarrow} \mathbf{N}^{\T} \boldsymbol{\gamma}+\frac{1}{2} \boldsymbol{\gamma}^{\T} \mathbf{H} \gamma=: Q(\boldsymbol{\gamma}) \end{aligned},\nonumber
\end{align}
where $\mathbf{N}=N\left(0,4 \bU_1 \bV_1^{-1} \bU_1\right)$ because $\sqrt{n} \widehat{\mathbf{D}}=2 \sqrt{n} \bU_1 \widehat{\boldsymbol{\Delta}} q_{n}(0)$. Therefore, by Lemma \ref{l4} and the argmax theorem (Corollary 5.58 in \cite{V1998}), we have $\widehat{\gamma}_{g} \stackrel{d}{\rightarrow} \tilde{\gamma}$, where 
\begin{align}
\tilde{\gamma}=\argmin_{\frac{\gamma}{\sqrt{n}}+\theta^*=env\left(\zeta\right)}{Q(\boldsymbol{\gamma})}=\argmin_{\frac{\gamma}{\sqrt{n}}+\theta^*=env\left(\zeta\right)} \frac{1}{2n}\left(\gamma+\mathbf{H}^{-1} \mathbf{N}\right)^{\T} \mathbf{H}\left(\gamma+\mathbf{H}^{-1} \mathbf{N}\right).\nonumber
\end{align}
Since the parameter vector $\gamma$ is overparameterized, we next apply \cite{sh1986} to establish the asymptotic distribution of $\tilde{\gamma}$. We form the discrepancy function $F(x,\xi)$ in \cite{sh1986} as
\begin{align}\label{dis}
F(x, \xi)=\frac{1}{2}\left(\frac{\gamma}{\sqrt{n}}+\frac{\mathbf{H}^{-1} \mathbf{N}}{\sqrt{n}}\right)^{\T} \mathbf{H}\left(\frac{\gamma}{\sqrt{n}}+\frac{\mathbf{H}^{-1} \mathbf{N}}{\sqrt{n}}\right) 
\end{align}
where $x$ and $\xi$ in our context represent $-\mathbf{H}^{-1} \mathbf{N} / \sqrt{n}$  and $\gamma / \sqrt{n}$, respectively. It's easy to check that \eqref{dis} satisfies Shapiro's assumptions 1-5 and $\frac{\partial^{2} F}{\partial \xi \xi^{\T}}=\mathbf{H}=2 \bU_1 \bV_1^{-1} \bU_1$.  In addition, $-\mathbf{H}^{-1} \mathbf{N} \stackrel{d}{\rightarrow} N\left(0, \bU_1^{-1} \bV_1 \bU_1^{-1}\right)$. Let $\Psi_1 = \frac{\partial env(\zeta)}{\partial \zeta^{\T}}$. Therefore, by applying Proposition 4.1 of \cite{sh1986}, we have $\tilde{\gamma} \stackrel{d}{\rightarrow} N\left(0, \boldsymbol{\Lambda}_{g}\right)$, where
\begin{align}
\boldsymbol{\Lambda}_{g} = \Psi_{1}\left(\Psi_{1}^{\T} \mathbf{H} \Psi_{1}\right)^{\dagger} \Psi_{1}^{\T} \mathbf{H} \cdot \operatorname{avar}\left(-\mathbf{H}^{-1} \mathbf{N}\right) \cdot \mathbf{H} \Psi_{1}\left(\Psi_{1}^{\T} \mathbf{H} \Psi_{1}\right)^{\dagger} \Psi_{1}^{\T}=\Psi_{1}\left(\Psi_{1}^{\T} \bU_1 \bV_1^{-1} \bU_1 \Psi_{1}\right)^{\dagger} \Psi_{1}^{\T}.\nonumber
\end{align}
Hence $\widehat{\gamma}_{g}=\sqrt{n}\left(\widehat{\theta}_{g}-\theta^*\right) \stackrel{d}{\rightarrow} N\left(0, \boldsymbol{\Lambda}_{g}\right)$. Notice that $\bU_1 \bV_1^{-1} \bU_1 = \operatorname{avar}(\sqrt{n}\tilde\theta)^{-1}$, so the first part of Theorem \ref{t4} is proved.

Finally, to give a closed-form expression of $\Psi_1$, we introduce the contraction and expansion matrices that connect the $\ve$ and $\vech$ operators. For any symmetric matrix $M\in \real^{m\times m}$, let $C_m\in\real^{m(m+1)/2 \times m^2}$ and $E_m\in\real^{m^2 \times m(m+1)/2}$ be the unique contraction and expansion matrices such that $\vech(M) = C_m\ve(M)$ and $\ve(M) = E_m\vech(M)$ \citep{hs1979}. Then, it can be shown that $\Psi_1$ has the following expression $$ \begin{pmatrix}
  1 & 0 & 0 & 0 & 0 & 0\\
  0 &  \Gamma & \eta^{\T} \otimes \mathbf{I}_p & 0 & 0 & 0\\
  0 & 0 & 2C_p(\Gamma\Omega \otimes \mathbf{I}_p - \Gamma\otimes\Gamma_0\Omega_0\Gamma_0^{\T}) & C_p(\Gamma\otimes\Gamma)E_u & C_p(\Gamma_0 \otimes\Gamma_0)E_{p-u} & 0\\
  0 &  0 &  0 &  0 &  0 &  \mathbf{I}_p
 \end{pmatrix}.$$
For more detail on this result, see Appendix A.6 in \cite{cook2018}. Thus we have finished the proof of Theorem \ref{t4}.
\end{proof}

\begin{proof}[Proof of Corollary \ref{c1}]
To see the asymptotic efficiency of $\hat\theta_g$ relative to $\tilde\theta$, let $\Upsilon=\bU_1^{-1} \bV_1 \bU_1^{-1}$, we have
\begin{align}
\operatorname{avar}(\sqrt{n}\tilde\theta) - \operatorname{avar}(\sqrt{n}\hat\theta_g) &= \Upsilon-\Psi_1\left(\Psi_1^{\T} \Upsilon^{-1} \Psi_1\right)^{\dagger} \Psi_1^{\T}=\Upsilon^{1 / 2}\left(\mathbf{I}-\mathbf{P}_{\Upsilon^{-1 / 2} \Psi_1}\right) \Upsilon^{1 / 2} \nonumber\\
=\Upsilon^{1 / 2} \mathbf{Q}_{\Upsilon^{-1 / 2} \Psi_1}\Upsilon^{1 / 2} \succeq 0 \nonumber
\end{align}
where $\mathbf{P}_{\Upsilon^{-1 / 2} \Psi_1}=\Upsilon^{-1 / 2} \Psi_1\left(\Psi_1^{\T} \Upsilon^{-1} \Psi_1\right)^{\dagger} \Psi_1^{\T} \Upsilon^{-1 / 2}$ is the projection matrix onto the column span of $\Upsilon^{-1 / 2} \Psi_1$, and $\mathbf{Q}_{\Upsilon^{-1 / 2}\Psi_1}$ is the projection matrix onto the orthogonal complement of span of $\Upsilon^{-1 / 2} \Psi_1$. Since both of them are positive semidefinite, the proof is completed.
\end{proof}

\begin{proof}[Proof of Corollary \ref{c2}]
Let $\theta_1^* = (\mu^*,\beta^*,\vech(\Sigma_x^*))$ and let $\hat\theta_{g1}$ be the enveloped Huber estimator of $\theta_1^*$, i.e. $\hat\theta_{g1}$ contains the first $1+p+\frac{p(p+1)}{2}$ components of $\hat\theta_g$. Correspondingly, we can write $$\Psi_1 = \begin{pmatrix}
  \Psi & 0 \\
  0 &   \mathrm{I}_{p} 
 \end{pmatrix}, \bU_1 = \begin{pmatrix}
  \bU & 0 \\
  0 & \mathrm{I}_{p} 
 \end{pmatrix}, \text{ and\;} \bV_1 = \begin{pmatrix}
  \bV & \bA \\
  \bA^{\T} &  \bV_{1,33} \end{pmatrix}$$
where $\bA = (\mathbf{0}_{p\times(p+1)},\bV_{1,23}^{\T})^{\T}$, $\Psi$ and $\bU$ are the corresponding $\frac{p(p+1)}{2}+p+1$ dimensional matrices. Then, by the result of Theorem \ref{t4}, it can be directly verified that $$\sqrt{n}(\hat\theta_{g1} - \theta_1^*) \overset{d}\to N(\mathbf{0},\Psi\left(\Psi^{\T} \bU \bV^{-1} \bU \Psi \right)^{\dagger} \Psi^{\T}).$$

Recall we denote $\zeta = (\mu,\eta^{\T},\ve(\Gamma)^{\T},\vech(\Omega)^{\T},\vech(\Omega_0)^{\T},\mu_x^{\T})^{\T}$. And we have the relation $\theta = (\mu,\beta^{\T},\vech(\Sigma_x)^{\T},\mu_x^{\T})^{\T} = env(\zeta)$. We now let $\zeta_1 = (\mu,\eta^{\T},\ve(\Gamma)^{\T},\vech(\Omega)^{\T},\vech(\Omega_0)^{\T})^{\T}$ and $\theta_1 = (\mu,\beta^{\T},\vech(\Sigma_x)^{\T})^{\T} = env_1(\zeta_1) \coloneqq (\mu, (\Gamma\eta)^{\T}, \vech(\Gamma\Omega\Gamma^{\T}+\Gamma_0\Omega_0\Gamma_0^{\T})^{\T})^{\T}$. Then we have $\Psi = \frac{\partial env_1(\zeta_1)}{\partial \zeta_1^{\T}}$. By the expression of $\Psi_1$ given in Theorem \ref{t4}, we know that $\Psi$ has the following expression $$ \Psi = \begin{pmatrix}
  1 & 0 & 0 & 0 & 0 \\
  0 &  \Gamma & \eta^{\T} \otimes \mathbf{I}_p & 0 & 0\\
  0 & 0 & \Psi_{33} & C_p(\Gamma\otimes\Gamma)E_u & C_p(\Gamma_0 \otimes\Gamma_0)E_{p-u}
 \end{pmatrix}$$
where $\Psi_{33} = 2C_p(\Gamma\Omega \otimes \mathbf{I}_p - \Gamma\otimes\Gamma_0\Omega_0\Gamma_0^{\T})$.

When $\epsilon$ and $\bx$ are independent, we have $\ev_{\theta^*}[\psi(\epsilon)^2\bw\bw^{\T}] = \ev_{\theta^*}[\psi(\epsilon)^2]\ev_{\theta^*}[\bw\bw^{\T}]$, and also $\ev_{\theta^*}[\psi'(\epsilon)\bw\bw^{\T}] = \ev_{\theta^*}[\psi'(\epsilon)]\ev_{\theta^*}[\bw\bw^{\T}]$. Then $$\bU^{-1}\bV\bU^{-1} = \begin{pmatrix}
  \frac{\ev_{\theta^*}[\psi^2(\epsilon)]}{(\ev_{\theta^*}[\psi'(\epsilon)])^2}\ev_{\theta^*}[\bw\bw^{\T}]^{-1} & 0  \\
  0 &   \var_{\theta^*}\Big(\vech\big((\bx-\mu_x^*)(\bx-\mu_x^*)^{\T}\big)\Big)
 \end{pmatrix}.$$ By Theorem \ref{t2}, this is also the $\mathrm{avar}(\tilde\theta_1)$, where $\tilde\theta_1 = (\hat\mu_h,\hat\beta_h^{\T},\vech(S_x)^{\T})^{\T}$, i.e. the first $\frac{p(p+1)}{2}+p+1$ components of the Huber estimator $\tilde\theta$. In particular, this implies that $\sqrt{n}(\hat\beta_h - \beta^*)\overset{d}\to N(\mathbf{0}, \frac{\ev_{\theta^*}[\psi^2(\epsilon)]}{(\ev_{\theta^*}[\psi'(\epsilon)])^2} \Sigma_x^{*-1})$.

Let $\bJ = \bU\bV^{-1}\bU$. Notice that the asymptotic variance of $\hat\theta_{g1}$ depends on $\Psi$ only through its column space. Therefore we may replace $\Psi$ with any matrix $\Psi'$ which has the same column space as $\Psi$. Following \cite{cook2010}, we choose $\Psi' = \text{blockdiag}(1,\Psi_1')$ such that $\Psi'^{\T} \bJ \Psi'$ is block diagonal, where $$\Psi_1' =  \begin{pmatrix}
  \Gamma & \eta^{\T} \otimes \Gamma_0 & 0 & 0 \\
   0 & \Psi_{33}' & C_p(\Gamma\otimes\Gamma)E_u & C_p(\Gamma_0 \otimes\Gamma_0)E_{p-u}
 \end{pmatrix} \equiv \begin{pmatrix}
  \Psi_{1,1}' & \Psi_{1,2}' & \Psi_{1,3}' & \Psi_{1,4}' 
 \end{pmatrix} ,$$
where $\Psi_{33}' = 2C_p(\Gamma\Omega \otimes \Gamma_0 - \Gamma \otimes \Gamma_0\Omega_0)$. If we let $\Psi'' = \text{blockdiag}(1,\Psi_1'')$, where $$\Psi_1'' =  \begin{pmatrix}
  \mathbf{I}_u & \eta^{\T}\otimes \Gamma^{\T} & 0 & 0 \\
  0 &  \mathbf{I}_u \otimes \Gamma_0^{\T} & 0 & 0 \\
  0 &  2C_u (\Omega\otimes \Gamma^{\T})& \mathbf{I}_{u(u+1)/2} & 0\\
  0 &  0 & 0 & \mathbf{I}_{(p-u)(p-u+1)/2}
 \end{pmatrix},$$
then it can be seen that $\Psi'\Psi'' = \Psi$ and $\Psi''$ is a full rank square matrix. This confirms that $\Psi'$ has the same column space as $\Psi$. If $\ev_{\theta^*}[\bx] = 0$, we then have $$\bJ = \text{blockdiag}\left(\frac{(\ev_{\theta^*}[\psi'(\epsilon)])^2}{\ev_{\theta^*}[\psi^2(\epsilon)]}, \frac{(\ev_{\theta^*}[\psi'(\epsilon)])^2}{\ev_{\theta^*}[\psi^2(\epsilon)]} \Sigma_x^*, \var_{\theta^*}\Big(\vech\big(\bx\bx^{\T}\big)\Big)^{-1}\right).$$ Write $\bJ = \text{blockdiag}(\frac{(\ev_{\theta^*}[\psi'(\epsilon)])^2}{\ev_{\theta^*}[\psi^2(\epsilon)]},\bJ_1)$. When $\bx$ is normal, the moment estimator of $\Sigma_x^*$ in this case is asymptotically equivalent to the MLE of $\Sigma_x^*$ with the Fisher information matrix $\frac{1}{2}E_p^{\T}(\Sigma_x^{*-1}\otimes \Sigma_x^{*-1})E_p$ \citep{cook2013}. Thus $\bJ_1 =  \text{blockdiag}\left(\frac{(\ev_{\theta^*}[\psi'(\epsilon)])^2}{\ev_{\theta^*}[\psi^2(\epsilon)]} \Sigma_x^*, \frac{1}{2}E_p^{\T}(\Sigma_x^{*-1}\otimes \Sigma_x^{*-1})E_p\right)$. After matrix multiplication, one can verify that $\Psi_1'^{\T}\bJ_1\Psi_1'$ is block diagonal matrix, and so is $\Psi'^{\T}\bJ\Psi'$. Therefore, we have $\mathrm{avar}(\hat\beta_{eh}) = \Gamma(\Psi_{1,1}'^{\T}\bJ_1\Psi_{1,1}')^{\dagger}\Gamma^{\T} + (\eta^{\T} \otimes \Gamma_0) (\Psi_{1,2}'^{\T}\bJ_1\Psi_{1,2}')^{\dagger}(\eta \otimes \Gamma_0^{\T})$. By the expression of $\Psi'$ and $\bJ_1$, we have $\Psi_{1,1}'^{\T}\bJ_1\Psi_{1,1}' = \Gamma^{\T}\frac{(\ev_{\theta^*}[\psi'(\epsilon)])^2}{\ev_{\theta^*}[\psi^2(\epsilon)]}\Sigma_x^* \Gamma = \frac{(\ev_{\theta^*}[\psi'(\epsilon)])^2}{\ev_{\theta^*}[\psi^2(\epsilon)]}\Omega$. For the term $\Psi_{1,2}'^{\T}\bJ_1\Psi_{1,2}'$, we have $\Psi_{1,2}'^{\T}\bJ_1\Psi_{1,2}' = \bZ_1 + \bZ_2$, where $\bZ_1 = (\eta \otimes \Gamma_0^{\T}) \frac{(\ev_{\theta^*}[\psi'(\epsilon)])^2}{\ev_{\theta^*}[\psi^2(\epsilon)]}\Sigma_x^* (\eta^{\T} \otimes \Gamma_0)$ and $\bZ_2= 2(\Gamma\Omega\otimes\Gamma_0 - \Gamma\otimes\Gamma_0\Omega_0)^{\T}C_p^{\T}E_p^{\T}(\Sigma_x^{*-1}\otimes \Sigma_x^{*-1})E_p C_p(\Gamma\Omega\otimes\Gamma_0 - \Gamma\otimes\Gamma_0\Omega_0)$. Notice that we have $\Sigma_x^* = \Gamma\Omega\Gamma^{\T} + \Gamma_0\Omega_0\Gamma_0^{\T}$, so $\Gamma^{\T}\Sigma_x^*\Gamma_0 = 0$ and $\Sigma_x^{*-1} = \Gamma\Omega^{-1}\Gamma^{\T} + \Gamma_0\Omega_0^{-1}\Gamma_0^{\T}$. Then we have
\begin{align}
\bZ_1 &= (\eta \otimes \Gamma_0^{\T}) \frac{(\ev_{\theta^*}[\psi'(\epsilon)])^2}{\ev_{\theta^*}[\psi^2(\epsilon)]}\Sigma_x^* (\eta^{\T} \otimes \Gamma_0) \n\\
&=\frac{(\ev_{\theta^*}[\psi'(\epsilon)])^2}{\ev_{\theta^*}[\psi^2(\epsilon)]} (\eta\eta^{\T} \otimes \Gamma_0^{\T}\Sigma_x^*\Gamma_0)\n\\
&=\frac{(\ev_{\theta^*}[\psi'(\epsilon)])^2}{\ev_{\theta^*}[\psi^2(\epsilon)]} (\eta\eta^{\T} \otimes \Omega_0),\n
\end{align}
and by Appendix A.6 in \cite{cook2018},
\begin{align}
\bZ_2 &= 2(\Gamma\Omega\otimes\Gamma_0 - \Gamma\otimes\Gamma_0\Omega_0)^{\T}C_p^{\T}E_p^{\T}(\Sigma_x^{*-1}\otimes \Sigma_x^{*-1})E_p C_p(\Gamma\Omega\otimes\Gamma_0 - \Gamma\otimes\Gamma_0\Omega_0) \n\\
&=2(\Omega\Gamma^{\T}\otimes\Gamma_0^{\T} - \Gamma^{\T}\otimes\Omega_0\Gamma_0^{\T})(\Sigma_x^{*-1}\otimes \Sigma_x^{*-1})E_p C_p(\Gamma\Omega\otimes\Gamma_0 - \Gamma\otimes\Gamma_0\Omega_0)\n\\
&=2(\Omega\Gamma^{\T}\otimes\Gamma_0^{\T} - \Gamma^{\T}\otimes\Omega_0\Gamma_0^{\T})(\Sigma_x^{*-1}\otimes \Sigma_x^{*-1})\bP_{E_p}(\Gamma\Omega\otimes\Gamma_0 - \Gamma\otimes\Gamma_0\Omega_0)\n\\
&=2(\Gamma^{\T}\otimes\Omega_0^{-1}\Gamma_0^{\T} - \Omega^{-1}\Gamma^{\T}\otimes\Gamma_0^{\T})\bP_{E_p}(\Gamma\Omega\otimes\Gamma_0 - \Gamma\otimes\Gamma_0\Omega_0)\n\\
& = \Gamma^{\T}\Gamma\Omega\otimes\Omega_0^{-1}\Gamma_0^{\T}\Gamma_0 - \Omega^{-1}\Gamma^{\T}\Gamma\Omega\otimes\Gamma_0^{\T}\Gamma_0 - \Gamma^{\T}\Gamma\otimes\Omega_0^{-1}\Gamma_0^{\T}\Gamma_0\Omega_0 + \Omega^{-1}\Gamma^{\T}\Gamma\otimes\Gamma_0^{\T}\Gamma_0\Omega_0\n\\
&= \Omega\otimes\Omega_0^{-1} + \Omega^{-1}\otimes\Omega_0 - 2\mathbf{I}_u \otimes\mathbf{I}_{p-u}.\n
\end{align}
Therefore, $\Psi_{1,2}'^{\T}\bJ_1\Psi_{1,2}' = \frac{(\ev_{\theta^*}[\psi'(\epsilon)])^2}{\ev_{\theta^*}[\psi^2(\epsilon)]} (\eta\eta^{\T} \otimes \Omega_0) +  \Omega\otimes\Omega_0^{-1} + \Omega^{-1}\otimes\Omega_0 - 2\mathbf{I}_u \otimes\mathbf{I}_{p-u}$, thus \begin{align}
&\mathrm{avar}(\hat\beta_{eh}) =\frac{\ev_{\theta^*}[\psi^2(\epsilon)]}{(\ev_{\theta^*}[\psi'(\epsilon)])^2} \Gamma\Omega^{-1}\Gamma^{\T} \n\\
&+ (\eta^{\T} \otimes \Gamma_0) \left(\frac{(\ev_{\theta^*}[\psi'(\epsilon)])^2}{\ev_{\theta^*}[\psi^2(\epsilon)]} (\eta\eta^{\T} \otimes \Omega_0)+ \Omega\otimes\Omega_0^{-1} + \Omega^{-1}\otimes\Omega_0 - 2\mathbf{I}_u \otimes\mathbf{I}_{p-u}\right)^{\dagger}(\eta \otimes \Gamma_0^{\T}).\n
\end{align}
So we finish the proof of Corollary \ref{c2}.

\end{proof}

\section*{Acknowledgements}
Zou is supported in part by NSF grants DMS-1915842 and DMS-2015120.


\begin{thebibliography}{9}
\expandafter\ifx\csname natexlab\endcsname\relax\def\natexlab#1{#1}\fi

\bibitem[{Andrews(1994)}]{andrew1994}
\textsc{Andrew, D. W.} (1994).
\newblock Empirical process methods in econometrics.
\newblock \textit{Handbook of Econometrics} \textbf{4}, 2247--2294.

\bibitem[{Cook(2018)}]{cook2018}
\textsc{Cook, R. D.} (2018).
\newblock \textit{An Introduction to Envelopes: Dimension Reduction for Efficient Estimation in Multivariate Statistics}. \newblock John Wiley \& Sons.

\bibitem[{Cook and Forzani(2019)}]{cook2019}
\textsc{Cook, R. D.} and \textsc{Forzani, L.}(2019).
\newblock Partial least squares prediction in high-dimensional regression.
\newblock \textit{The Annals of Statistics} \textbf{47(2)}, 884--908.

\bibitem[{Cook, Forzani and Su(2016)}]{cook2016}
\textsc{Cook, R. D.}, \textsc{Forzani, L.} and \textsc{Su, Z.} (2016).
\newblock A note on fast envelope estimation.
\newblock \textit{Journal of Multivariate Analysis} \textbf{150}, 42--54.

\bibitem[{Cook, Forzani and Zhang(2015)}]{cook2015}
\textsc{Cook, R. D.}, \textsc{Forzani, L.} and \textsc{Zhang, X.}(2015).
\newblock Envelopes and reduced-rank regression.
\newblock \textit{Biometrika} \textbf{102(2)}, 439--456.


\bibitem[{Cook, Helland and Su(2013)}]{cook2013}
\textsc{Cook, R. D.}, \textsc{Helland, I. S.} and \textsc{Su, Z.} (2013).
\newblock Envelopes and partial least squares regression.
\newblock \textit{Journal of the Royal Statistical Society: Series B} \textbf{75(5)}, 851--877.

\bibitem[{Cook, Li and Chiaromonte(2007)}]{cook2007}
\textsc{Cook, R. D.}, \textsc{Li, B.} and \textsc{Chiaromonte, F.} (2007).
\newblock Dimension reduction in regression without matrix inversion.
\newblock \textit{Biometrika} \textbf{94}, 569--584.

\bibitem[{Cook, Li and Chiaromonte(2010)}]{cook2010}
\textsc{Cook, R. D.}, \textsc{Li, B.} and \textsc{Chiaromonte, F.} (2010).
\newblock Envelope Models for Parsimonious and Efficient Multivariate Linear Regression.
\newblock \textit{Statistica Sinica} \textbf{20}, 927--1010.

\bibitem[{Cook and Su(2013)}]{su2013}
\textsc{Cook, R. D.} and \textsc{Su, Z.}(2013).
\newblock Scaled envelopes: scale-invariant and efficient estimation in multivariate linear regression.
\newblock \textit{Biometrika} \textbf{100(4)}, 939--954.


\bibitem[{Cook and Su(2016)}]{cooksu2016}
\textsc{Cook, R. D.} and \textsc{Su, Z.}(2016).
\newblock Scaled predictor envelopes and partial least-squares regression.
\newblock \textit{Technometrics} \textbf{58(2)}, 155--165.


\bibitem[{Cook and Zhang(2015)}]{zhang2015}
\textsc{Cook, R. D.} and \textsc{Zhang, X.}(2015).
\newblock Foundations for envelope models and methods.
\newblock \textit{Journal of the American Statistical Association} \textbf{110(510)}, 599--611.

\bibitem[{Ding and Cook(2018)}]{dingcook2018}
\textsc{Ding, S.} and \textsc{Cook, R. D.}(2018).
\newblock Matrix variate regressions and envelope models.
\newblock \textit{Journal of the Royal Statistical Society: Series B} \textbf{80(2)}, 387--408.



\bibitem[{Ding et~al.(2017)}]{ding2017}
\textsc{Ding, S.}, \textsc{Su, Z.}, \textsc{Zhu, G.} and \textsc{Wang, L.} (2017).
\newblock Envelope quantile regression.
\newblock \textit{Tech. rep}, 2017

\bibitem[{Gu and Zou(2016)}]{gu2016}
\textsc{Gu, Y.} and \textsc{Zou, H.} (2016).
\newblock High-dimensional generalizations of asymmetric least squares regression and their applications.
\newblock \textit{Annals of Statistics}
\textbf{44}, 2661--2694.

\bibitem[{Hansen(1982)}]{lars1982}
\textsc{Hansen, L.P.}(1982).
\newblock Large sample properties of generalized method of moments estimators.
\newblock \textit{Econometrica: Journal of the Econometric Society}, 1029--1054.



\bibitem[{Henderson and Searle(1979)}]{hs1979}
\textsc{Henderson, H.V.} and \textsc{Searle, S.R.}(1979).
\newblock Vec and vech operators for matrices, with some uses in Jacobians and multivariate statistics.
\newblock \textit{Canadian Journal of Statistics} \textbf{7(1)}, 65--81.

\bibitem[{Huber(1964)}]{Huber1964}
\textsc{Huber, P. J.} (1964).
\newblock Robust Estimation of A Location Parameter.
\newblock \textit{The Annals of Mathematical Statistics} \textbf{35}, 73--101.

\bibitem[{Huber(1973)}]{Huber1973}
\textsc{Huber, P. J.} (1973).
\newblock Robust regression: asymptotics, conjectures and Monte Carlo.
\newblock \textit{The Annals of Statistics} \textbf{1(5)}, 799--821.

\bibitem[{Huber(2004)}]{Huber2004}
\textsc{Huber, P. J.} (2004).
\newblock Robust Statistics.
\newblock John Wiley \& Sons, Vol. 523.

\bibitem[{Khare, Pal and Su(2017)}]{su2017}
\textsc{Khare, K.}, \textsc{Pal, S.} and \textsc{Su, Z.}(2017).
\newblock A bayesian approach for envelope models.
\newblock \textit{The Annals of Statistics} \textbf{45(1)}, 196--222.


\bibitem[{Koenker(2005)}]{Koenker2005}
\textsc{Koenker, R.} (2005).
\newblock \textit{Quantile regression}. \newblock Cambridge university press.


\bibitem[{Koenker and Bassett(1978)}]{kb1978}
\textsc{Koenker, R.} and \textsc{Bassett, G.}(1978).
\newblock Regression quantiles.
\newblock \textit{Econometrica: Journal of the
Econometric Society} \textbf{46(1)}, 33--50.




\bibitem[{Li and Zhang(2017)}]{zhang2017}
\textsc{Li, L.} and \textsc{Zhang, X.}(2017).
\newblock Parsimonious tensor response regression.
\newblock \textit{Journal of the American Statistical Association} \textbf{112(519)}, 1131--1146.


\bibitem[{Nelder and Mead(1965)}]{nm1965}
\textsc{Nelder, J. A.} and \textsc{Mead, R.}(1965).
\newblock A simplex method for function minimization.
\newblock \textit{The computer journal} \textbf{7(4)}, 308--313.


\bibitem[{Newey and MacFadden(1994)}]{newey1994}
\textsc{Newey, W. K.} and \textsc{McFadden, D.}(1994).
\newblock Large sample estimation and hypothesis testing.
\newblock \textit{Handbook of Econometrics} \textbf{4}, 2111--2245.


\bibitem[{Pakes and Pollard(1989)}]{pakes1989}
\textsc{Pakes, A.} and \textsc{Pollard, D.}(1989).
\newblock Simulation and the asymptotics of optimization estimators.
\newblock \textit{Econometrica: Journal of the Econometric Society} \textbf{57(5)}, 1027--1057.



\bibitem[{Rekabdarkolaee et~al.(2017)}]{R2017}
\textsc{Rekabdarkolaee, H. M.}, \textsc{Wang, Q.}, \textsc{Naji, Z.} and \textsc{Fuentes, M.} (2017).
\newblock New parsimonious multivariate spatial model: Spatial envelope.
\newblock arXiv preprint arXiv:1706.06703.


\bibitem[{Shapiro(1986)}]{sh1986}
\textsc{Shapiro, A.} (1986).
\newblock Asymptotic theory of overparameterized structural models.
\newblock \textit{Journal of the American Statistical Association} \textbf{81(393)}, 142--149.


\bibitem[{Su and Cook(2011)}]{su2011}
\textsc{Su, Z.} and \textsc{Cook, R. D.}(2011).
\newblock Partial envelopes for efficient estimation in multivariate linear regression.
\newblock \textit{Biometrika} \textbf{98(1)}, 133--146.

\bibitem[{Su and Cook(2012)}]{su2012}
\textsc{Su, Z.} and \textsc{Cook, R. D.}(2012).
\newblock Inner envelopes: efficient estimation in multivariate linear regression.
\newblock \textit{Biometrika} \textbf{99(3)}, 687--702.



\bibitem[{Su et~al.(2016)}]{su2016}
\textsc{Su, Z.}, \textsc{Zhu, G.}, \textsc{Chen, X.} and \textsc{Yang, Y.} (2016).
\newblock Sparse envelope model: efficient estimation and response variable selection in multivariate linear regression.
\newblock \textit{Biometrika}, \textbf{103(3)}, 579--593.


\bibitem[{Van der Vaart(1998)}]{V1998}
\textsc{Van der Vaart, A. W.} (1998).
\newblock \textit{Asymptotic statistics}. \newblock Cambridge University Press.


\bibitem[{Wang and Ding(2018)}]{wang2018}
\textsc{Wang, L.} and \textsc{Ding, S.}(2018).
\newblock Vector autoregression and envelope model.
\newblock \textit{Stat} \textbf{7(1)}, e203.

\bibitem[{Wang, Wu and Li(2012)}]{wang2012quantile}
\textsc{Wang, L.}, \textsc{Wu, Y.} and \textsc{Li, R.} (2012).
\newblock Quantile regression for analyzing heterogeneity in ultra-high
  dimension.
\newblock \textit{Journal of the American Statistical Association}
  \textbf{107}, 214--222.
  
\bibitem[{Zhu and Su(2020)}]{su2020}
\textsc{Zhu, G.} and \textsc{Su, Z.}(2020).
\newblock Envelope-based sparse partial least squares.
\newblock \textit{The Annals of Statistics} \textbf{48(1)}, 161--182.


\end{thebibliography}
\end{document}